\newcommand{\ms}[1]{\mbox{\scriptsize #1}}
\DeclareMathOperator*{\argmin}{arg\,min}
\newtheorem{invariant}{Invariant}
\newtheorem{theorem}{Theorem}
\newenvironment{proof}[1][Proof]{\begin{trivlist}
   \item[\hskip \labelsep {\bfseries #1}]}{\hfill$\square$\end{trivlist}}
\title{
A Unifying Formalism for Shortest Path Problems with Expensive\\
Edge Evaluations via Lazy Best-First Search over Paths with Edge Selectors
}
\author{
   Christopher M.~Dellin \and Siddhartha S.~Srinivasa\\
   The Robotics Institute, Carnegie Mellon University\\
      \{cdellin, siddh\}@cs.cmu.edu\\
}
\begin{document}

\maketitle

\ifx\aaaiversion\undefined
\insert\footins{\noindent\footnotesize%
Presented at ICAPS 2016, London. This extended version includes
proofs and timing results in the appendix.}
\fi

\begin{abstract}
While the shortest path problem has myriad applications,
the computational efficiency of suitable algorithms
depends intimately on the underlying problem domain.
In this paper,
we focus on domains where evaluating the edge weight function
dominates algorithm running time.
Inspired by approaches in robotic motion planning,
we define and investigate the \emph{Lazy Shortest Path} class of
algorithms which is differentiated by the choice of
an \emph{edge selector} function.
We show that several algorithms in the literature are equivalent to
this lazy algorithm for appropriate choice of this selector.
Further, we propose various novel selectors inspired by
sampling and statistical mechanics,
and find that these selectors outperform
existing algorithms on a set of example problems.
\end{abstract}

\section{Introduction}

Graphs provide a powerful abstraction
capable of representing problems in a wide variety of domains
from computer networking to puzzle solving
to robotic motion planning.
In particular,
many important problems can be captured
as \emph{shortest path problems} (Figure~\ref{fig:sp-intro}),
wherein a path $p^*$ of minimal length is desired
between two query vertices through a graph $G$
with respect to an edge weight function $w$.

Despite the expansive applicability of this single abstraction,
there exist a wide variety of algorithms in the literature
for solving the shortest path problem efficiently.
This is because the measure of computational efficiency,
and therefore the correct choice of algorithm,
is inextricably tied to the underlying problem domain.

The computational costs incurred by an algorithm
can be broadly categorized into three sources
corresponding to the blocks in Figure~\ref{fig:sp-intro}.
One such source consists of queries on the structure
of the graph $G$ itself.
The most commonly discussed such operation,
\emph{expanding} a vertex (determining its successors),
is especially fundamental
when the graph is represented implicitly,
e.g. for domains with large graphs
such as the 15-puzzle or Rubik's cube.
It is with respect to vertex expansions
that A* \cite{hart1968astar} is optimally efficient.

A second source of computational cost consists of maintaining
ordered data structures inside the algorithm itself,
which is especially important for problems with large branching
factors.
For such domains,
approaches such as partial expansion \cite{yoshizumi2000peastar}
or iterative deepening \cite{korf1985idastar}
significantly reduce the number of vertices generated and stored
by either selectively filtering surplus vertices from the frontier,
or by not storing the frontier at all.

The third source of computational cost arises not from reasoning
over the structure of $G$,
but instead from evaluating the edge weight function $w$
(i.e. we treat discovering an out-edge and determining its weight
separately).
Consider for example the problem of articulated robotic motion planning
using roadmap methods \cite{kavrakietal1996prm}.
While these graphs are often quite small
(fewer than $10^5$ vertices),
determining the weight of each edge requires performing many
collision and distance computations for the complex geometry
of the robot and environment,
resulting in planning times of multiple seconds to find a path.

In this paper,
we consider problem domains in which evaluating the edge weight
function $w$ dominates algorithm running time
and investigate the following research question:
\begin{quote}
How can we minimize the number of edges we need to evaluate
to answer shortest-path queries?
\end{quote}

\begin{figure}
\centering
\begin{tikzpicture}
   \tikzset{>=latex} 
   \node[draw,align=center,minimum height=1.0cm,thick]
      at (2.5,2.5) {Graph\\$G=(V,E)$};
   \node[draw,align=center,minimum height=1.0cm,thick]
      at (5.5,2.5) {Weight Function\\$w:E \rightarrow [0,+\infty]$};
   \node[draw,align=center,minimum height=1.0cm,minimum width=3cm,thick]
      (alg) at (4,1) {Shortest Path\\Algorithm};
   \node[draw,align=center,shape=document,minimum width=1.5cm,ultra thin]
      (query) at (1,1) {Query $u$};
   \node[draw,align=center,shape=document,minimum width=1.5cm,ultra thin]
      (path) at (7,1) {Path $p^*$};
   \draw[->] (3,2) -- (3,1.5);
   \draw[->] (5,2) -- (5,1.5);
   \draw[->] (query.east) -- (alg.west);
   \draw[->] (alg.east) -- (path.west);
\end{tikzpicture}
\caption{While solving a shortest path query,
   a shortest path algorithm incurs computation cost from three sources:
   examining the structure of the graph $G$,
   evaluating the edge weight function $w$,
   and maintaining internal data structures.}
\label{fig:sp-intro}
\end{figure}
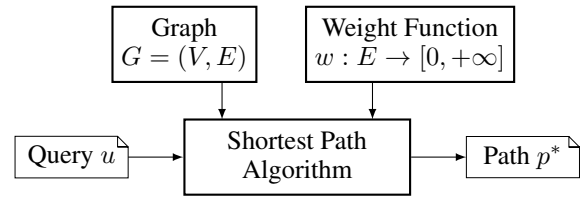

We make three primary contributions.
First,
inspired by lazy collision checking techniques from 
robotic motion planning \cite{bohlin2000lazyprm},
we formulate a class of shortest-path algorithms 
that is well-suited to problem domains with expensive edge evaluations.
Second,
we show that several existing algorithms in the literature
can be expressed as special cases of this algorithm.
Third,
we show that the extensibility afforded by the algorithm allows for
novel edge evaluation strategies,
which can outperform existing algorithms
over a set of example problems.

\section{Lazy Shortest Path Algorithm}

We describe a lazy approach to finding short paths
which is well-suited to domains with
expensive edge evaluations.

\subsection{Problem Definition}

A path $p$ in a graph $G = (V,E)$
is composed of a sequence of adjacent edges 
connecting two endpoint vertices.
Given an edge weight function
$w : E \rightarrow [0,+\infty]$,
the length of the path with respect to $w$ is then:
\begin{equation}
   \mbox{len}(p, w) = \sum_{e \in p} w(e).
\end{equation}
Given a single-pair planning query
$u: (v_{\ms{start}}, v_{\ms{goal}})$
inducing a set of satisfying paths $P_u$,
the \emph{shortest-path problem} is:
\begin{equation}
   p^* = \argmin_{p \, \in \, P_u} \mbox{len}(p, w).
   \label{eqn:objective}
\end{equation}

A shortest-path algorithm computes $p^*$
given $(G, u, w)$.
Many such algorithms have been proposed
to efficiently accommodate a wide array of underlying problem domains.
The well-known principle of best-first search (BFS)
is commonly employed to select vertices for expansion
so as to minimize such expansions while guaranteeing optimality.
Since we seek to minimize edge evaluations,
we apply BFS to the question of selecting candidate paths in
$G$ for evaluation.
The resulting algorithm, Lazy Shortest Path (LazySP),
is presented in Algorithm~\ref{alg:lazy-outline},
and can be applied to graphs defined implicitly or explicitly.

\subsection{The Algorithm}

\begin{algorithm}[t]
\caption{Lazy Shortest Path (LazySP)}
\label{alg:lazy-outline}
\begin{algorithmic}[1]
\Function {\textsc{LazyShortestPath}}{$G, u, w, w_{\ms{est}}$}
\State $E_{\ms{eval}} \leftarrow \emptyset$ 
\State $w_{\ms{lazy}}(e) \leftarrow w_{\ms{est}}(e) \quad \forall e \in E$ 
\Loop
   \State $p_{\ms{candidate}} \leftarrow
      \mbox{\sc ShortestPath}(G, u, w_{\ms{lazy}})$ 
      \label{line:lazy-outline-shortestpath}
   \If {$p_{\ms{candidate}} \subseteq E_{\ms{eval}}$} 
      \State \Return $p_{\ms{candidate}}$ 
   \EndIf
   \State $E_{\ms{selected}} \leftarrow  \mbox{\sc Selector}(G, p_{\ms{candidate}})$ 
   \label{line:lazy-outline-chooseedges} 
   \For {$e \in E_{\ms{selected}} \setminus E_{\ms{eval}}$} 
      \State $w_{\ms{lazy}}(e) \leftarrow w(e)$ \Comment Evaluate (expensive)
      \State $E_{\ms{eval}} \leftarrow E_{\ms{eval}} \cup e$ 
   \EndFor
\EndLoop
\EndFunction
\end{algorithmic}
\end{algorithm}

We track evaluated edges with the set $E_{\ms{eval}}$.
We are given an estimator function $w_{\ms{est}}$ of the true edge weight $w$.
This estimator is inexpensive to compute
(e.g. edge length or even $0$).
We then define a \emph{lazy} weight function $w_{\ms{lazy}}$
which returns the
true weight of an evaluated edge and otherwise
uses the inexpensive estimator $w_{\ms{est}}$.

At each iteration of the search,
the algorithm uses $w_{\ms{lazy}}$ to compute a candidate path
$p_{\ms{candidate}}$
by calling an existing solver \textsc{ShortestPath}
(note that this invocation requires no evaluations of $w$).
Once a candidate path has been found,
it is returned if it is fully evaluated.
Otherwise,
an \emph{edge selector} is employed which selects
graph edge(s) for evaluation.
The true weights of these edges are then evaluated
(incurring the requisite computational cost),
and the algorithm repeats.


LazySP is complete and optimal:

\begin{theorem}[Completeness of LazySP]
If the graph $G$ is finite,
\textsc{ShortestPath} is complete,
and the set $E_{\ms{selected}}$
returned by \textsc{Selector}
returns at least one unevaluated edge on $p_{\ms{candidate}}$,
then \textsc{LazyShortestPath} is complete.
\label{thm:lazy-completeness}
\end{theorem}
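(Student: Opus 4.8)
The plan is to separate the claim into a termination argument and a correctness argument, with the whole proof resting on one monotonicity observation: across iterations of the main loop in Algorithm~\ref{alg:lazy-outline}, the set $E_{\ms{eval}}$ of evaluated edges only grows, and it is always contained in the finite set $E$. Everything else follows from pairing this bound with the assumed completeness of \textsc{ShortestPath}.

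First I would establish termination by showing that each pass through the loop either returns or strictly enlarges $E_{\ms{eval}}$. The hypothesis on \textsc{Selector} is exactly what drives this: the returned set $E_{\ms{selected}}$ is guaranteed to contain an unevaluated edge lying on $p_{\ms{candidate}}$, and an unevaluated edge is by definition one absent from $E_{\ms{eval}}$. Hence $E_{\ms{selected}} \setminus E_{\ms{eval}} \neq \emptyset$, so the inner \textbf{for} loop evaluates at least one fresh edge and inserts it into $E_{\ms{eval}}$. Since $E_{\ms{eval}}$ never loses an element and $|E|$ is finite, at most $|E|$ non-returning iterations are possible, so the loop must halt after at most $|E|+1$ iterations.

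Next I would argue correctness at the point of halting. If the algorithm performs the full complement of enlarging iterations without returning, it reaches a state with $E_{\ms{eval}} = E$, so that $w_{\ms{lazy}}(e) = w(e)$ for every edge. The subsequent call to \textsc{ShortestPath} therefore runs against the true weight function; by its assumed completeness it reports a genuine shortest path $p^*$ whenever one exists, and because every edge is now evaluated the guard $p^* \subseteq E_{\ms{eval}}$ holds automatically, so $p^*$ is returned. Any earlier return is likewise legitimate, since the guard only fires on a $p_{\ms{candidate}}$ all of whose edges are evaluated and hence carry their true weights. When the query admits no satisfying path, completeness of \textsc{ShortestPath} on the fully evaluated graph reports this, and \textsc{LazyShortestPath} inherits the verdict.

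The step I expect to require the most care is the progress argument, namely pinning down that the qualitative \textsc{Selector} hypothesis (``at least one unevaluated edge on $p_{\ms{candidate}}$'') forces the quantitative conclusion $E_{\ms{selected}} \setminus E_{\ms{eval}} \neq \emptyset$ on \emph{every} non-terminating iteration; this is what rules out an infinite loop that repeatedly reselects already evaluated edges. A secondary subtlety is the treatment of infeasible queries and of estimates $w_{\ms{est}}$ that differ from $w$: here I would lean on the fact that monotone growth of $E_{\ms{eval}}$ eventually forces $w_{\ms{lazy}} \equiv w$, collapsing both the feasible and infeasible cases onto the completeness guarantee already assumed for \textsc{ShortestPath}.
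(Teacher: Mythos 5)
Your proof is correct and takes essentially the same approach as the paper's: the paper's entire argument is your termination step --- each non-returning iteration evaluates at least one unevaluated edge by the \textsc{Selector} hypothesis, and since $E$ is finite the loop must eventually halt. Your additional care about what happens at the point of return (the worst case $E_{\ms{eval}} = E$ forcing $w_{\ms{lazy}} \equiv w$, and the infeasible-query case) is a sound elaboration of details the paper leaves implicit.
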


\begin{theorem}[Optimality of LazySP]
If $w_{\ms{est}}$ is chosen such that
$w_{\ms{est}}(e) \leq \epsilon \, w(e)$ for some parameter
$\epsilon \geq 1$ and
\textsc{LazyShortestPath} terminates
with some path $p_{\ms{ret}}$,
then $\mbox{len}(p_{\ms{ret}}, w) \leq \epsilon \, \ell^*$
with $\ell^*$ the length of an optimal path.
\label{thm:lazy-optimality}
\end{theorem}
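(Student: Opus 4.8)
The plan is to exploit two facts that hold at the moment of termination: first, that $p_{\ms{ret}}$ is fully evaluated, and second, that $p_{\ms{ret}}$ is a shortest path with respect to the lazy weight function $w_{\ms{lazy}}$ in effect at that final iteration. I would begin by formalizing the termination condition. The algorithm returns $p_{\ms{ret}}$ only when it passes the check $p_{\ms{candidate}} \subseteq E_{\ms{eval}}$ on line~6, so every edge of $p_{\ms{ret}}$ lies in $E_{\ms{eval}}$. By the definition of $w_{\ms{lazy}}$, which returns the true weight on evaluated edges, this immediately gives $\mbox{len}(p_{\ms{ret}}, w_{\ms{lazy}}) = \mbox{len}(p_{\ms{ret}}, w)$. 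This is the anchor that converts the lazy length, which the algorithm actually minimizes, into the true length we wish to bound.

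Next I would use the optimality of the inner solver. Because $p_{\ms{ret}}$ is the output of \textsc{ShortestPath}$(G, u, w_{\ms{lazy}})$ on the final iteration, it is a minimizer of lazy length over all satisfying paths. In particular, letting $p^*$ denote a path achieving the true optimum $\ell^* = \mbox{len}(p^*, w)$, we have $\mbox{len}(p_{\ms{ret}}, w_{\ms{lazy}}) \leq \mbox{len}(p^*, w_{\ms{lazy}})$. The remaining task is then purely to relate $\mbox{len}(p^*, w_{\ms{lazy}})$ back to $\ell^*$.

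The crux of the argument is a per-edge bound showing $w_{\ms{lazy}}(e) \leq \epsilon\, w(e)$ for every edge $e$, regardless of whether it has been evaluated. Here I would split into two cases. If $e \in E_{\ms{eval}}$, then $w_{\ms{lazy}}(e) = w(e) \leq \epsilon\, w(e)$, where the inequality uses the hypothesis $\epsilon \geq 1$. If $e \notin E_{\ms{eval}}$, then $w_{\ms{lazy}}(e) = w_{\ms{est}}(e) \leq \epsilon\, w(e)$ directly by the assumed bound on the estimator. Summing this inequality over the edges of $p^*$ yields $\mbox{len}(p^*, w_{\ms{lazy}}) \leq \epsilon\, \mbox{len}(p^*, w) = \epsilon\, \ell^*$.

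Finally I would chain the three relations together to obtain $\mbox{len}(p_{\ms{ret}}, w) = \mbox{len}(p_{\ms{ret}}, w_{\ms{lazy}}) \leq \mbox{len}(p^*, w_{\ms{lazy}}) \leq \epsilon\, \ell^*$, which is the claim. I do not anticipate a genuine obstacle here, since the argument is a short chain of inequalities; the one point requiring care is the two-case per-edge bound, and in particular the explicit use of $\epsilon \geq 1$ to handle evaluated edges (whose lazy weight equals the true weight rather than the possibly-smaller estimate). It is worth noting that the argument makes no assumption on the selector beyond what termination provides, so optimality holds for \emph{any} edge selector; the selector affects only efficiency, not correctness.
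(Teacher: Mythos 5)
Your proposal is correct and follows essentially the same argument as the paper's proof: the chain $\mbox{len}(p_{\ms{ret}}, w) = \mbox{len}(p_{\ms{ret}}, w_{\ms{lazy}}) \leq \mbox{len}(p^*, w_{\ms{lazy}}) \leq \epsilon\,\ell^*$, using termination for the first equality, the inner solver's optimality for the middle inequality, and the per-edge bound $w_{\ms{lazy}}(e) \leq \epsilon\, w(e)$ for the last. Your explicit two-case treatment of evaluated versus unevaluated edges (with $\epsilon \geq 1$ covering the former) merely spells out what the paper states compactly, so there is nothing to flag.
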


The optimality of LazySP depends on the admissibility of
$w_{\ms{est}}$
in the same way that the optimality of A* depends on
the admissibility of its goal heuristic $h$.
Theorem~\ref{thm:lazy-optimality} establishes the general
bounded suboptimality of LazySP
w.r.t. the inflation parameter $\epsilon$.
While our theoretical results (e.g. equivalences)
hold for any choice of $\epsilon$,
for clarity our examples and experimental results
focus on cases with $\epsilon = 1$.
\ifx\aaaiversion\undefined
Proofs are available in the appendix.
\else
Proofs are available in \cite{dellin2016lazyspextended}.
\fi

\subsection{The Edge Selector: Key to Efficiency}

\begin{algorithm}[t]
\caption{Various Simple LazySP Edge Selectors}
\begin{algorithmic}[1]
\Function {\textsc{SelectExpand}}{$G, p_{\ms{candidate}}$}
   \State $e_{\ms{first}} \leftarrow$ first unevaluated $e \in p_{\ms{candidate}}$
   \State $v_{\ms{frontier}} \leftarrow G.\mbox{source}(e_{\ms{first}})$
   \State $E_{\ms{selected}} \leftarrow G.\mbox{out\_edges}(v_{\ms{frontier}})$
   \State \Return $E_{\ms{selected}}$
\EndFunction
\vspace{0.02in}
\Function {\textsc{SelectForward}}{$G, p_{\ms{candidate}}$}
   \State \Return $\{ \mbox{first unevaluated } e \in p_{\ms{candidate}} \}$
\EndFunction
\vspace{0.02in}
\Function {\textsc{SelectReverse}}{$G, p_{\ms{candidate}}$}
   \State \Return $\{ \mbox{last unevaluated } e \in p_{\ms{candidate}} \}$
\EndFunction
\vspace{0.02in}
\Function {\textsc{SelectAlternate}}{$G, p_{\ms{candidate}}$}
   \If {LazySP iteration number is odd}
      \State \Return $\{ \mbox{first unevaluated } e \in p_{\ms{candidate}} \}$
   \Else
      \State \Return $\{ \mbox{last unevaluated } e \in p_{\ms{candidate}} \}$
   \EndIf
\EndFunction
\vspace{0.02in}
\Function {\textsc{SelectBisection}}{$G, p_{\ms{candidate}}$}
   \State \Return $\left\{ \begin{array}{ll}
      \mbox{unevaluated } e \in p_{\ms{candidate}} \\
      \mbox{furthest from nearest evaluated edge}
      \end{array} \right\}$
\EndFunction
\end{algorithmic}
\label{alg:simple-selectors}
\end{algorithm}

The LazySP algorithm exhibits a rough similarity to optimal
replanning algorithms such as
D* \cite{stentz1994dstar}
which plan a sequence of shortest paths for a mobile robot
as new edge weights are discovered during its traverse.
D* treats edge changes
passively as an aspect of the problem setting
(e.g. a sensor with limited range).

The key difference is that our problem setting treats 
edge evaluations as an active choice that can be exploited.
While any choice of edge selector that meets the conditions above
will lead to an algorithm that is complete and optimal,
its \emph{efficiency} is dictated by the choice of this
selector.
This motivates the theoretical and empirical investigation of different
edge selectors in this paper.

\begin{figure*}[t!]
   \centering
   \begin{subfigure}[b]{2.47cm}
      \centering
      \includegraphics{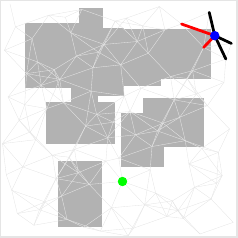} \\
      \vspace{0.04in}
      \includegraphics{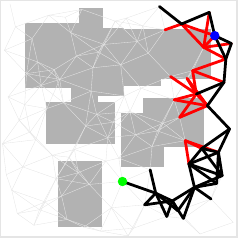} \\
      \vspace{0.04in}
      \includegraphics{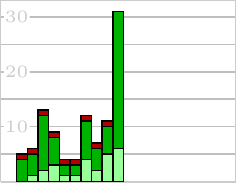} \\
      \caption{Expand[77]}
   \end{subfigure}
   \begin{subfigure}[b]{2.47cm}
      \centering
      \includegraphics{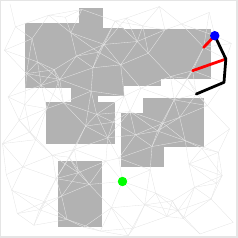} \\
      \vspace{0.04in}
      \includegraphics{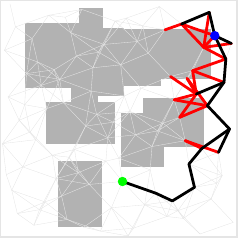} \\
      \vspace{0.04in}
      \includegraphics{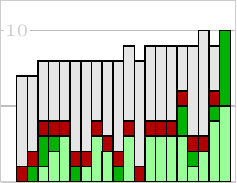} \\
      \caption{Forward[34]}
   \end{subfigure}
   \begin{subfigure}[b]{2.47cm}
      \centering
      \includegraphics{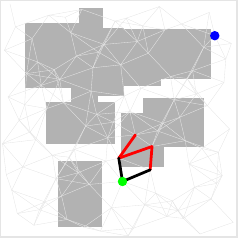} \\
      \vspace{0.04in}
      \includegraphics{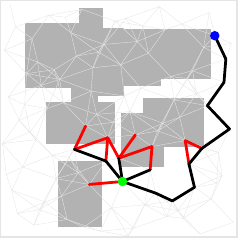} \\
      \vspace{0.04in}
      \includegraphics{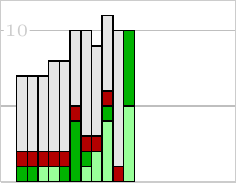} \\
      \caption{Reverse[24]}
   \end{subfigure}
   \begin{subfigure}[b]{2.47cm}
      \centering
      \includegraphics{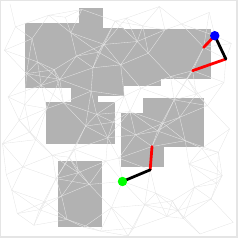} \\
      \vspace{0.04in}
      \includegraphics{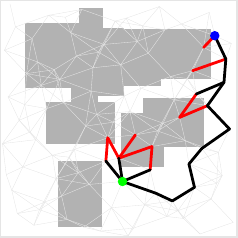} \\
      \vspace{0.04in}
      \includegraphics{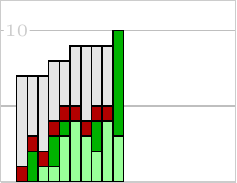} \\
      \caption{Alternate[23]}
   \end{subfigure}
   \begin{subfigure}[b]{2.47cm}
      \centering
      \includegraphics{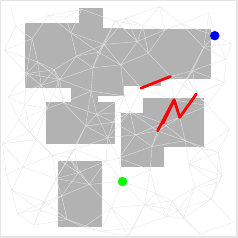} \\
      \vspace{0.04in}
      \includegraphics{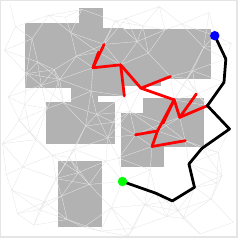} \\
      \vspace{0.04in}
      \includegraphics{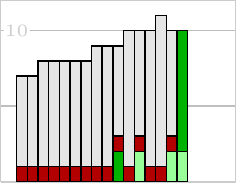} \\
      \caption{Bisection[25]}
   \end{subfigure}
   \begin{subfigure}[b]{2.47cm}
      \centering
      \includegraphics{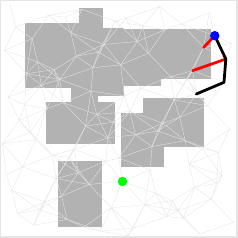} \\
      \vspace{0.04in}
      \includegraphics{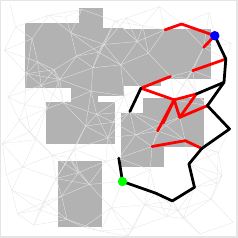} \\
      \vspace{0.04in}
      \includegraphics{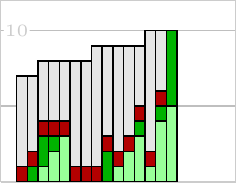} \\
      \caption{\scalebox{0.97}[1.0]{WeightSamp[22]}}
   \end{subfigure}
   \begin{subfigure}[b]{2.47cm}
      \centering
      \includegraphics{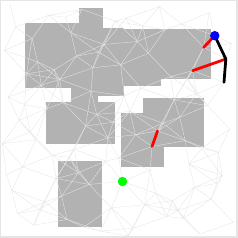} \\
      \vspace{0.04in}
      \includegraphics{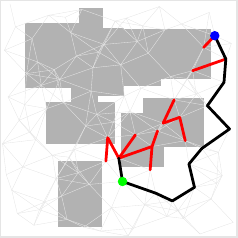} \\
      \vspace{0.04in}
      \includegraphics{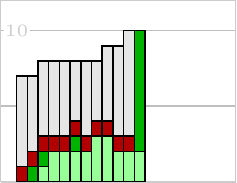} \\
      \caption{Partition[22]}
   \end{subfigure}
   \caption{Snapshots of the LazySP algorithm using each edge selector
      discussed in this paper on the same obstacle roadmap graph problem,
      with start (\tikz[baseline=-0.5ex]{\node[circle,fill=blue,inner sep=1pt]{};})
      and goal (\tikz[baseline=-0.5ex]{\node[circle,fill=green,inner sep=1pt]{};}).
      At top, the algorithms after evaluating five edges
      (evaluated edges labeled as
      \tikz{\draw[very thick] (0,0) -- (0.15,0.15);}  valid
      or \tikz{\draw[very thick,red] (0,0) -- (0.15,0.15);} invalid).
      At middle, the final set of evaluated edges.
      At bottom, for each unique path considered from left to right,
      the number of edges on the path that are
      \tikz{\node[fill=green!40!white,draw=black]{};}\;already evaluated,
      \tikz{\node[fill=green!70!black,draw=black]{};}\;evaluated and valid,
      \tikz{\node[fill=red!70!black,draw=black]{};}\;evaluated and invalid,
      and \tikz{\node[fill=black!10!white,draw=black]{};}\;unevaluated.
      The total number of edges evaluated is noted in brackets.
      Note that the scale on the Expand plot has been adjusted
      because the selector evaluates many edges not on the candidate
      path at each iteration.
      }
   \label{fig:snapshots}
\end{figure*}

\subsubsection{Simple selectors.}
We codify five common strategies in
Algorithm~\ref{alg:simple-selectors}.
The Expand selector captures the edge weights that are evaluated
during a conventional vertex expansion.
The selector identifies the first unevaluated edge
$e_{\ms{first}}$ on the candidate path,
and considers the source vertex of this edge a \emph{frontier} vertex.
It then selects all out-edges of this frontier vertex
for evaluation.
The Forward and Reverse selectors select the first and last
unevaluated edge on the candidate path, respectively
(note that Forward returns a subset of Expand).

The Alternate selector simply alternates between Forward
and Reverse on each iteration.
This can be motivated by both bidirectional search algorithms
as well as motion planning algorithms such as
RRT-Connect \cite{kuffner2000rrtconnect}
which tend to perform well w.r.t. state evaluations.

The Bisection selector
chooses among those unevaluated edges
the one furthest from an evaluated edge on the candidate path.
This selector is roughly analogous to the collision checking strategy
employed by the Lazy PRM \cite{bohlin2000lazyprm}
as applied to our problem on abstract graphs.

In the following section,
we demonstrate that instances of LazySP using simple selectors
yield equivalent results to existing vertex algorithms.
We then discuss two more sophisticated
selectors motivated by weight function sampling
and statistical mechanics.

\section{Edge Equivalence to A* Variants}

\begin{table}
   \centering
   \begin{tabular}{lll}
      \toprule
      LazySP & Existing & \\
      Selector & Algorithm & Result \\
      \midrule
      Expand & (Weighted) A* & Edge-equivalent \\
      & & (Theorems \ref{thm:astar-equiv-from-lazy},
                 \ref{thm:astar-equiv-to-lazy}) \\
      \addlinespace[0.3em]
      Forward & Lazy Weighted A* & Edge-equivalent \\
      & & (Theorems \ref{thm:lwastar-equiv-from-lazy},
                 \ref{thm:lwastar-equiv-to-lazy}) \\
      \addlinespace[0.3em]
      Alternate & Bidirectional Heuristic & Conjectured \\
      & Front-to-Front Algorithm & \\
      \bottomrule
   \end{tabular}%
   \caption{LazySP equivalence results.
      The A*, LWA*, and BHFFA algorithms use reopening and the dynamic
      $h_{\ms{lazy}}$ heuristic (\ref{eqn:h_lazy}).}
   \label{table:equivalences}
\end{table}

In the previous section,
we introduced LazySP as the path-selection analogue
to BFS vertex-selection algorithms.
In this section,
we make this analogy more precise.
In particular,
we show that LazySP-Expand
is edge-equivalent to a variant of A*
(and Weighted A*),
and that LazySP-Forward is edge-equivalent to a variant of
Lazy Weighted A*
(see Table~\ref{table:equivalences}).
It is important to be specific about the conditions under which
these equivalences arise,
which we detail here.
\ifx\aaaiversion\undefined
Proofs are available in the appendix.
\else
Proofs are available in \cite{dellin2016lazyspextended}.
\fi

\subsubsection{Edge equivalence.}
We say that two algorithms are \emph{edge-equivalent} if they
evaluate the same edges in the same order.
We consider an algorithm to have evaluated an edge
the first time the edge's true weight is requested.

\subsubsection{Arbitrary tiebreaking.}
For some graphs,
an algorithm may have multiple allowable choices at each iteration
(e.g. LazySP with multiple candidate shortest paths,
or A* with multiple vertices in OPEN with lowest $f$-value).
We will say that algorithm A is equivalent to algorithm B
if for any choice available to A,
there exists an allowable choice available to B
such that the same edge(s) are evaluated by each.

\subsubsection{A* with reopening.}
We show equivalence to variants of A* and Lazy Weighted A*
that do not use a CLOSED list to prevent
vertices from being visited more than once.

\subsubsection{A* with a dynamic heuristic.}
In order to apply A* and Lazy Weighted A* to our problem,
we need a goal heuristic over vertices.
The most simple may be
\begin{equation}
   h_{\ms{est}}(v) = \min_{p : v \rightarrow v_g} \mbox{len}(p, w_{\ms{est}}).
   \label{eqn:h_est}
\end{equation}
Note that the value of this heuristic could be computed as a
pre-processing step using Dijkstra's algorithm \cite{dijkstra1959anote}
before iterations begin.
However,
in order for the equivalences to hold,
we require the use of the lazy heuristic
\begin{equation}
   h_{\ms{lazy}}(v) = \min_{p : v \rightarrow v_g} \mbox{len}(p, w_{\ms{lazy}}).
   \label{eqn:h_lazy}
\end{equation}
This heuristic is dynamic in that it depends on $w_{\ms{lazy}}$
which changes as edges are evaluated.
Therefore,
heuristic values must be recomputed for all affected vertices on OPEN
after each iteration.

\subsection{Equivalence to A*}

We show that the LazySP-Expand algorithm
is edge-equivalent to a variant of the A*
shortest-path algorithm.
We make use of two invariants that are maintained during the
progression of A*.
\begin{invariant}
If $v$ is discovered by A* and $v'$ is undiscovered,
with $v'$ a successor of $v$,
then $v$ is on OPEN.%
\label{inv:astar-cundisc-popen}%
\end{invariant}
\begin{invariant}
If $v$ and $v'$ are discovered by A*,
with $v'$ a successor of $v$,
and $g[v] + w(v,v') < g[v']$,
then $v$ is on OPEN.%
\label{inv:astar-wless-popen}%
\end{invariant}
When we say a vertex is \emph{discovered},
we mean that it is either on OPEN or CLOSED.
Note that Invariant \ref{inv:astar-wless-popen} holds
because we allow vertices to be reopened;
without reopening (and with an inconsistent heuristic),
later finding a cheaper path to $v$ (and not reopening $v'$)
would invalidate the invariant.

We will use the goal heuristic $h_{\ms{lazy}}$ from (\ref{eqn:h_lazy}).
Note that if an admissible edge weight estimator $\hat{w}$ exists
(that is, $\hat{w} \leq w$),
then our A* can approximate the Weighted A* algorithm
\cite{pohl1970weightedastar}
with parameter $\epsilon$
by using $w_{\ms{est}} = \epsilon \, \hat{w}$,
and the suboptimality bound from
Theorem~\ref{thm:lazy-optimality} holds.

\begin{figure}[t]
   \centering
   \begin{tikzpicture}
      
      \draw[fill=black!05] (2.1,1.5) ellipse (0.4cm and 0.5cm);
      \draw[fill=black!05] (5.9,1.5) ellipse (0.4cm and 0.5cm);
      \draw[fill=black!05] (4,1) ellipse (0.4cm and 0.4cm);
      
      \node[align=center] at (0.75,1.5) {$P_{\ms{candidate}}$\\(LazySP)};
      \node[align=center] at (7.25,1.5) {$S_{\ms{candidate}}$\\(A*)};
      \node[align=center] at (4,1.75) {$V_{\ms{frontier}}$};
      
      \node[fill=black,circle,inner sep=1pt] (p1) at (2.15,1.7) {};
      \node[fill=black,circle,inner sep=1pt] (p2) at (2.05,1.3) {};
      
      \node[fill=black,circle,inner sep=1pt] (s1) at (5.95,1.8) {};
      \node[fill=black,circle,inner sep=1pt] (s2) at (5.85,1.3) {};
      
      \node[fill=black,circle,inner sep=1pt] (v1) at (4.1,0.9) {};
      
      \draw[->] (p1) -- (v1);
      \draw[->] (p2) -- (v1);
      
      \draw[->] (s1) -- (v1);
      \draw[->] (s2) -- (v1);
      
   \end{tikzpicture}
   \caption{Illustration of the equivalence
      between A* and LazySP-Expand.
      After evaluating the same set of edges,
      the next edges to be evaluated by each algorithm
      can both be expressed as a surjective mapping onto
      a common set of unexpanded
      frontier vertices.
      }
   \label{fig:astar-equiv-mapping}
\end{figure}
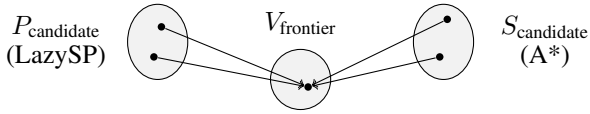

\subsubsection{Equivalence.}
In order to show edge-equivalence,
we consider the case where both algorithms
are beginning a new iteration
having so far evaluated the same set of edges.

LazySP-Expand has some set $P_{\ms{candidate}}$ of allowable
candidate paths minimizing $\mbox{len}(p,w_{\ms{lazy}})$;
the Expand selector will then identify a vertex on the chosen path
for expansion.

A* will iteratively select a set of vertices from OPEN to expand.
Because it is possible that a vertex is expanded multiple times
(and only the first expansion results in edge evaluations),
we group iterations of A* into \emph{sequences},
where each sequence $s$ consists of
(a) zero or more vertices from OPEN that have already been expanded,
followed by (b) one vertex from OPEN that is to be expanded
for the first time.

We show that both the set of allowable candidate paths $P_{\ms{candidate}}$
available to LazySP-Expand
and the set of allowable candidate vertex sequences $S_{\ms{candidate}}$
available to A*
map surjectively to the same set of unexpanded frontier vertices $V_{\ms{frontier}}$
as illustrated in Figure~\ref{fig:astar-equiv-mapping}.
This is established by way of
Theorems \ref{thm:astar-equiv-from-lazy}
and \ref{thm:astar-equiv-to-lazy} below.

\begin{theorem}
If LazySP-Expand and A* have evaluated the same set of edges,
then for any candidate path $p_{\ms{candidate}}$ chosen by LazySP
yielding frontier vertex $v_{\ms{frontier}}$,
there exists an allowable A* sequence $s_{\ms{candidate}}$
which also yields $v_{\ms{frontier}}$.
\label{thm:astar-equiv-from-lazy}
\end{theorem}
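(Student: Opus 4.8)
The plan is to exhibit, starting from the shared evaluation state, a concrete A* sequence whose single first-time expansion is $v_{\ms{frontier}}$. Write the prefix of $p_{\ms{candidate}}$ from the start to the frontier vertex as $v_0 = v_{\ms{start}}, v_1, \ldots, v_k = v_{\ms{frontier}}$; by the definition of the Expand selector every edge $(v_{i-1},v_i)$ is already evaluated, so on this prefix $w_{\ms{lazy}} = w$. Since in this A* variant an edge is evaluated only when its source is expanded, each of $v_0, \ldots, v_{k-1}$ has already been expanded at least once, and $v_{\ms{frontier}} = v_k$ is discovered (as a successor of the expanded $v_{k-1}$) but not yet expanded (its out-edge $e_{\ms{first}}$ on $p_{\ms{candidate}}$ is unevaluated), hence it currently sits on OPEN. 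First I would therefore aim to re-expand a subset of $v_0, \ldots, v_{k-1}$ --- operations that, being re-expansions, evaluate no new edges --- so as to drive $g[v_{\ms{frontier}}]$ down to the true prefix cost, and then expand $v_{\ms{frontier}}$ itself.

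Next I would establish the key pricing lemma: the minimum $f$-value on OPEN equals $\ell_{\ms{lazy}} := \mbox{len}(p_{\ms{candidate}}, w_{\ms{lazy}})$. For the lower bound, any vertex $v$ on OPEN carries a $g[v]$ realized by an actual path of evaluated edges, so $g[v] \ge d_{w_{\ms{lazy}}}(v_{\ms{start}}, v)$; combined with $h_{\ms{lazy}}(v) = d_{w_{\ms{lazy}}}(v, v_{\ms{goal}})$ and the triangle inequality this gives $f(v) = g[v] + h_{\ms{lazy}}(v) \ge \ell_{\ms{lazy}}$. For tightness, note that because $p_{\ms{candidate}}$ is $w_{\ms{lazy}}$-optimal its prefix is a shortest $w_{\ms{lazy}}$-path to any $v_i$ and its suffix a shortest $w_{\ms{lazy}}$-path from $v_i$, so $d_{w_{\ms{lazy}}}(v_{\ms{start}}, v_i) + h_{\ms{lazy}}(v_i) = \ell_{\ms{lazy}}$ for every prefix vertex; thus any prefix vertex whose stored $g$-value equals its true distance attains $f = \ell_{\ms{lazy}}$ exactly.

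The crux --- and the step I expect to be the main obstacle --- is the propagation argument showing A* can realize the correct $g$-values along the prefix through allowable choices. I would proceed by induction on $i$, maintaining that $g[v_i]$ equals the true prefix cost $d_{w_{\ms{lazy}}}(v_{\ms{start}}, v_i)$. The base case $g[v_0] = 0$ holds. For the step, if $g[v_i]$ already equals its true distance we skip; otherwise $g[v_{i-1}] + w(v_{i-1},v_i)$ (with $g[v_{i-1}]$ already correct) is strictly less than $g[v_i]$, so Invariant~\ref{inv:astar-wless-popen} places $v_{i-1}$ on OPEN, where by the pricing lemma its $f$-value is the minimal $\ell_{\ms{lazy}}$ and it is thus an allowable expansion under the arbitrary-tiebreaking convention; re-expanding it relaxes $g[v_i]$ to the correct value without lowering any $f$-value below $\ell_{\ms{lazy}}$. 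The delicate points to verify here are that each such re-expansion keeps the minimum OPEN $f$-value at $\ell_{\ms{lazy}}$ (so the next choice remains allowable) and that the ordered processing $v_0, v_1, \ldots$ guarantees $g[v_{i-1}]$ is already correct before it is used to relax $v_i$.

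After the induction reaches $v_k$, we have $g[v_{\ms{frontier}}] = d_{w_{\ms{lazy}}}(v_{\ms{start}}, v_{\ms{frontier}})$ and hence $f(v_{\ms{frontier}}) = \ell_{\ms{lazy}}$, so $v_{\ms{frontier}}$ is an allowable next expansion. Taking the (skip-adjusted) chain of re-expansions as the already-expanded members of the sequence and $v_{\ms{frontier}}$ as the terminating first-time expansion yields the required allowable A* sequence $s_{\ms{candidate}}$. Finally I would confirm edge-equivalence at this step: the re-expansions touch only previously evaluated out-edges, while expanding $v_{\ms{frontier}}$ evaluates precisely its out-edges, all of which are unevaluated because $v_{\ms{frontier}}$ was never previously expanded --- exactly the set $E_{\ms{selected}} \setminus E_{\ms{eval}}$ chosen by LazySP-Expand.
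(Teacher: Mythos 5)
Your proposal is correct and follows essentially the same route as the paper's proof: walk along the evaluated prefix of $p_{\ms{candidate}}$, use Invariant~\ref{inv:astar-wless-popen} to show each predecessor with stale $g$-value sits on OPEN at the minimal $f$-value $\ell_{\ms{lazy}}$ so it may be re-expanded without new evaluations, and conclude that the never-expanded $v_{\ms{frontier}}$ is on OPEN (the content of Invariant~\ref{inv:astar-cundisc-popen}) with $f = \ell_{\ms{lazy}}$, hence an allowable terminating expansion. Your phrasing via ``$g[v_i]$ equals the true prefix distance'' is equivalent to the paper's ``$f(v_i) = \ell^*_{\ms{lazy}}$'' bookkeeping, and your explicit skip case for vertices whose $g$-value is already correct anywhere along the prefix is, if anything, slightly more careful than the paper's trimming of only the initial run.
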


\begin{theorem}
If LazySP-Expand and A* have evaluated the same set of edges,
then for any candidate sequence $s_{\ms{candidate}}$ chosen by A*
yielding frontier vertex $v_{\ms{frontier}}$,
there exists an allowable LazySP path $p_{\ms{candidate}}$
which also yields $v_{\ms{frontier}}$.
\label{thm:astar-equiv-to-lazy}
\end{theorem}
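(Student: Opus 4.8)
The plan is to run the construction of Theorem~\ref{thm:astar-equiv-from-lazy} in reverse: given the frontier vertex $v_{\ms{frontier}}$ that A* is about to expand for the first time, I will exhibit a path $p_{\ms{candidate}}$ that (i) is a minimizer of $\mbox{len}(\cdot, w_{\ms{lazy}})$ and hence an allowable LazySP candidate, and (ii) has $v_{\ms{frontier}}$ as the source of its first unevaluated edge and hence is mapped to $v_{\ms{frontier}}$ by the Expand selector. I would build $p_{\ms{candidate}}$ by concatenating the A* back-pointer path $p_{\ms{pre}}$ from $v_{\ms{start}}$ to $v_{\ms{frontier}}$ with a $w_{\ms{lazy}}$-shortest path $p_{\ms{suf}}$ from $v_{\ms{frontier}}$ to $v_{\ms{goal}}$, the latter witnessing $h_{\ms{lazy}}(v_{\ms{frontier}})$ from (\ref{eqn:h_lazy}).

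First I would establish the two structural facts that align this concatenation with the Expand selector. The prefix $p_{\ms{pre}}$ is fully evaluated: each of its edges was relaxed by the expansion of its source vertex, and a vertex's out-edges are evaluated the first time it is expanded, so by induction along the back-pointers every prefix edge already carries its true weight. Conversely, the first edge of $p_{\ms{suf}}$ is an out-edge of $v_{\ms{frontier}}$, and since $v_{\ms{frontier}}$ is being expanded for the \emph{first} time, none of its out-edges has yet been evaluated. Thus the first unevaluated edge of $p_{\ms{candidate}}$ is precisely this suffix edge, whose source is $v_{\ms{frontier}}$, so both algorithms next evaluate exactly the out-edges of $v_{\ms{frontier}}$.

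The crux is to show $p_{\ms{candidate}}$ is genuinely $w_{\ms{lazy}}$-shortest, i.e. that $\mbox{len}(p_{\ms{candidate}}, w_{\ms{lazy}}) = g[v_{\ms{frontier}}] + h_{\ms{lazy}}(v_{\ms{frontier}}) = f[v_{\ms{frontier}}]$ equals the optimal lazy length $\ell_{\ms{lazy}}$. The bound $f[v_{\ms{frontier}}] \geq \ell_{\ms{lazy}}$ is immediate since $p_{\ms{candidate}}$ is a genuine start-goal path. For the reverse bound I would invoke the classical fact that some OPEN vertex on an optimal path carries the optimal cost, derived here directly from the invariants: writing $d_{\ms{lazy}}(\cdot)$ for the $w_{\ms{lazy}}$-distance from $v_{\ms{start}}$, fix a $w_{\ms{lazy}}$-shortest path $p^*$ and let $v_i$ be its deepest vertex reached with $g[v_i] = d_{\ms{lazy}}(v_i)$; its successor on $p^*$ is either undiscovered (so Invariant~\ref{inv:astar-cundisc-popen} places $v_i$ on OPEN) or has a strictly improvable $g$-value (so Invariant~\ref{inv:astar-wless-popen} places $v_i$ on OPEN). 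Because $h_{\ms{lazy}}$ is the exact $w_{\ms{lazy}}$ cost-to-go it is consistent, giving $f[v_i] = d_{\ms{lazy}}(v_i) + h_{\ms{lazy}}(v_i) = \ell_{\ms{lazy}}$; since A* expands the minimum-$f$ OPEN vertex, $f[v_{\ms{frontier}}] \leq f[v_i] = \ell_{\ms{lazy}}$. Combining the bounds forces $f[v_{\ms{frontier}}] = \ell_{\ms{lazy}}$, which in turn forces $g[v_{\ms{frontier}}] = d_{\ms{lazy}}(v_{\ms{frontier}})$ (so $p_{\ms{pre}}$ is an optimal prefix) and places $v_{\ms{frontier}}$ on a shortest path, completing the verification that $p_{\ms{candidate}}$ is an allowable candidate.

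I expect the main obstacle to be the bookkeeping around A*'s reopening and the grouping of its steps into sequences: the invariant-based OPEN-vertex argument must be applied at exactly the instant $v_{\ms{frontier}}$ is popped for the first time, after any re-expansions in its sequence, and I must confirm that re-expansions evaluate no new edges and therefore leave $w_{\ms{lazy}}$---and thus $\ell_{\ms{lazy}}$ and $h_{\ms{lazy}}$---unchanged throughout the sequence. Two minor subtleties remain to be dispatched: the goal case, where if the deepest optimal vertex is $v_{\ms{goal}}$ itself it serves as the OPEN witness when open, while when it is closed A* has already terminated and performs no frontier expansion; and, should zero-weight cycles exist, the observation that $p_{\ms{candidate}}$ may be taken simple without disturbing its first unevaluated edge.
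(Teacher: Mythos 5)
Your proposal matches the paper's proof in all essentials. The paper likewise disposes of the re-expansions first (they evaluate nothing), takes the first-time expansion of $v_{\ms{frontier}}$ with $f$-value $\ell$, constructs the candidate as an all-evaluated shortest prefix $p_a$ to $v_{\ms{frontier}}$ of length $g[v_{\ms{frontier}}]$ concatenated with a $w_{\ms{lazy}}$-shortest suffix $p_b$ realizing $h_{\ms{lazy}}(v_{\ms{frontier}})$, observes that the first edge of $p_b$ is unevaluated because $v_{\ms{frontier}}$ has never been expanded, and then rules out any $w_{\ms{lazy}}$-shorter path using Invariants \ref{inv:astar-cundisc-popen} and \ref{inv:astar-wless-popen}. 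Your departures are cosmetic: you take the back-pointer path rather than an arbitrary shortest evaluated prefix (and correctly note the sandwich argument repairs any slack in its length), and for the optimality half you exhibit the deepest vertex with $g[v_i] = d_{\ms{lazy}}(v_i)$ on a fixed $w_{\ms{lazy}}$-shortest path as an OPEN witness with $f = \ell_{\ms{lazy}}$, where the paper instead walks pairwise along a hypothetical shorter path $p'$ to a contradiction; these are the same argument in different dress.

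One step in your witness argument needs repair, though the paper's own write-up is loose at exactly the same point. Invariant \ref{inv:astar-wless-popen} is stated with the \emph{true} weight $w$, because A*'s relaxations use true weights of edges evaluated at expansion. From the deepest-vertex choice you only know $g[v_{i+1}] > d_{\ms{lazy}}(v_{i+1}) = g[v_i] + w_{\ms{lazy}}(v_i,v_{i+1})$; when the edge $(v_i,v_{i+1})$ is unevaluated and $w(v_i,v_{i+1}) > w_{\ms{est}}(v_i,v_{i+1})$ (the typical admissible-estimate case), this does \emph{not} imply the invariant's premise $g[v_i] + w(v_i,v_{i+1}) < g[v_{i+1}]$, so your dichotomy ``undiscovered or strictly improvable'' misses the case of a discovered successor reached across an unevaluated, underestimated edge. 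The conclusion survives via a third case: an unevaluated out-edge means $v_i$ was never expanded (in this A* variant expansion evaluates \emph{all} out-edges), hence $v_i$ is discovered but not on CLOSED, hence on OPEN --- the same reasoning that proves Invariant \ref{inv:astar-cundisc-popen}. The paper's corresponding branch (``Otherwise, we know that $f(v_b) = \ell'$'') silently needs the identical patch, since $g[v_b] \leq g[v_a] + w(v_a,v_b)$ bounds the lazy prefix length only when the edge is evaluated. With that one-line fix, and your closing remarks on re-expansions, the goal case, and simple paths (which the paper compresses into its opening sentence), your argument is complete.
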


\subsection{Equivalence to Lazy Weighted A*}

In a conventional vertex expansion algorithm,
determining a successor's cost is a function of both
the cost of the edge and the value of the heuristic.
If either of these components is expensive to evaluate,
an algorithm can defer its computation by maintaining the successor
on the frontier with an approximate cost until it is expanded.
The Fast Downward algorithm \cite{helmert2006fastdownward} is motivated
by expensive heuristic evaluations in planning,
whereas the Lazy Weighted A* (LWA*) algorithm \cite{cohen2014narms}
is motivated by expensive edge evaluations in robotics.

We show that the LazySP-Forward algorithm
is edge-equivalent to a variant of the Lazy Weighted A*
shortest-path algorithm.
For a given candidate path,
the Forward selector returns the first unevaluated edge.

\subsubsection{Variant of Lazy Weighted A*.}
We reproduce a variant of LWA* without a CLOSED list
in Algorithm~\ref{alg:lwastar}.
For the purposes of our analysis,
the reproduction differs from the original presentation,
and we detail those differences here.
With the exception of the lack of CLOSED,
the differences do not affect the behavior of the algorithm.

\begin{algorithm}[t]
\caption{Lazy Weighted A* (without CLOSED list)}
\label{alg:lwastar}
\begin{algorithmic}[1]
\Function {\textsc{LazyWeightedA*}}{$G, w, \hat{w}, h$}
\State $g[v_{\ms{start}}] \leftarrow 0$
   \Comment For uninitialized, $g[v] = \infty$
\State $Q_v \leftarrow \{ v_{\ms{start}} \}$
   \Comment Key: $g[v] + h(v)$%
   \label{line:lwastar-key-qvertices}
\State $Q_e \leftarrow \emptyset$
   \Comment Key: $g[v] + \hat{w}(v,v') + h(v')$%
   \label{line:lwastar-key-qedges}
\While {$\min(Q_v.{\mbox{TopKey}}, Q_e.{\mbox{TopKey}}) < g[v_{\ms{goal}}]$}
   \If {$Q_v.{\mbox{TopKey}} \leq Q_e.{\mbox{TopKey}}$}
      \State $v \leftarrow Q_v.{\mbox{Pop}}()$
      \For {$v' \in G.\mbox{GetSuccessors}(v)$}
         \State $Q_e.\mbox{Insert}((v,v'))$
      \EndFor
   \Else
      \State $(v,v') \leftarrow Q_e.{\mbox{Pop}}()$
      \If {$g[v'] \leq g[v] + \hat{w}(v,v')$}
         \label{line:lwastar-test}
         \State {\bf continue}
      \EndIf
      \State $g_{\ms{new}} \leftarrow g[v] + w(v,v')$
         \Comment evaluate
      \If {$g_{\ms{new}} < g[v']$}
         \State $g[v'] = g_{\ms{new}}$
         \State $Q_v.\mbox{Insert}(v')$
      \EndIf
   \EndIf
\EndWhile
\EndFunction
\end{algorithmic}
\end{algorithm}

The most obvious difference is that we present the original OPEN list
as separate vertex ($Q_v$) and edge ($Q_e$) priority queues,
with sorting keys shown on lines \ref{line:lwastar-key-qvertices}
and \ref{line:lwastar-key-qedges}.
A vertex $v$ in the original OPEN with $trueCost(v) = true$
corresponds to a vertex $v$ in $Q_v$,
whereas a vertex $v'$ in the original OPEN
with $trueCost(v') = false$ (and parent $v$)
corresponds to an edge $(v,v')$ in $Q_e$.
Use of the edge queue obviates the need for
duplicate vertices on OPEN with different parents
and the $conf(v)$ test for identifying such duplicates.
This presentation also highlights the similarity between LWA*
and the inner loop of the Batch Informed Trees (BIT*) algorithm
\cite{gammell2015bitstar}.

The second difference is that the edge usefulness test
(line 12 of the original algorithm)
has been moved from before inserting into OPEN
to after being popped from OPEN,
but before being evaluated
(line~\ref{line:lwastar-test} of Algorithm~\ref{alg:lwastar}).
This change is partially in compensation for removing the CLOSED
list.
This adjustment
does not affect the edges evaluated.

We make use of an invariant that is maintained during the
progression of Lazy Weighted A*.
\begin{invariant}
For all vertex pairs $v$ and $v'$,
with $v'$ a successor of $v$,
if $g[v] + \max(w(v,v'), \hat{w}(v,v')) < g[v']$,
then either vertex $v$ is on $Q_{v}$
or edge $(v,v')$ is on $Q_e$.%
\label{inv:lwastar}%
\end{invariant}
We will use $h(v) = h_{\ms{lazy}}(v)$ from (\ref{eqn:h_lazy})
and $\hat{w} = w_{\ms{lazy}}$.
Note that the use of these dynamic heuristics requires that the
$Q_v$ and $Q_e$ be resorted after every edge is evaluated.

\subsubsection{Equivalence.}
The equivalence follows similarly to that for A* above.
Given the same set of edges evaluated,
the set of allowable next evaluations is identical for each
algorithm.

\begin{theorem}
If LazySP-Forward and LWA* have evaluated the same set of edges,
then for any allowable candidate path $p_{\ms{candidate}}$
chosen by LazySP yielding first unevaluated edge $e_{ab}$,
there exists an allowable LWA* sequence $s_{\ms{candidate}}$
which also yields $e_{ab}$.
\label{thm:lwastar-equiv-from-lazy}
\end{theorem}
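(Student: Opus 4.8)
The plan is to mirror the surjective-mapping argument behind the A* equivalence (Theorems~\ref{thm:astar-equiv-from-lazy} and~\ref{thm:astar-equiv-to-lazy}), replacing frontier vertices with frontier \emph{edges}. I would first work on the LazySP side by decomposing the chosen candidate path $p_{\ms{candidate}}$ at its first unevaluated edge $e_{ab} = (a,b)$. Since every edge of the prefix from $v_{\ms{start}}$ to $a$ is already evaluated and $p_{\ms{candidate}}$ minimizes $\mbox{len}(\cdot, w_{\ms{lazy}})$, optimal substructure makes this prefix a shortest $w_{\ms{lazy}}$-path to $a$; writing $g[a]$ for its length and using $h_{\ms{lazy}}$ from (\ref{eqn:h_lazy}), the total cost is
\begin{equation}
   \ell = g[a] + w_{\ms{lazy}}(a,b) + h_{\ms{lazy}}(b),
\end{equation}
and, because $e_{ab}$ lies on an optimal $a$-to-goal subpath, $h_{\ms{lazy}}(a) = w_{\ms{lazy}}(a,b) + h_{\ms{lazy}}(b)$. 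As $p_{\ms{candidate}}$ is globally shortest, $\ell$ equals the minimum $w_{\ms{lazy}}$-distance from $v_{\ms{start}}$ to $v_{\ms{goal}}$.

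Next I would transfer these quantities to the LWA* state. The first step is a lemma that, with $\hat{w} = w_{\ms{lazy}}$, the stored $g[v]$ equals the shortest $w$-distance to $v$ through evaluated edges for every $v$; this follows from the usual Dijkstra-with-reopening argument, using that reopening propagates $g$-improvements across already-evaluated edges (a re-popped evaluated edge has $\hat{w} = w$, so it relaxes its head correctly on line~\ref{line:lwastar-test}). In particular LWA*'s $g[a]$ agrees with the LazySP value above. Reading off the keys on lines~\ref{line:lwastar-key-qvertices}--\ref{line:lwastar-key-qedges}, the identity $h_{\ms{lazy}}(a) = w_{\ms{lazy}}(a,b) + h_{\ms{lazy}}(b)$ then shows that vertex $a$ in $Q_v$ and edge $(a,b)$ in $Q_e$ carry the \emph{same} key $\ell$.

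With $g$-correctness in hand, I would show that \emph{every} pending item has key at least $\ell$: writing $d_{\ms{lazy}}(v)$ for the shortest $w_{\ms{lazy}}$-distance to $v$, a vertex key is $g[v] + h_{\ms{lazy}}(v) \geq d_{\ms{lazy}}(v) + h_{\ms{lazy}}(v) \geq \ell$, and an edge key $g[v] + w_{\ms{lazy}}(v,v') + h_{\ms{lazy}}(v')$ is bounded below the same way. Hence $\ell$ is the minimum key and there are \emph{no} operations strictly cheaper than $\ell$ to worry about — only ties at the top. Invoking Invariant~\ref{inv:lwastar}, I argue $e_{ab}$ is reachable: either $(a,b)$ already sits in $Q_e$, or $a$ still sits in $Q_v$ and a single non-evaluating (``free'') pop of $a$ inserts $(a,b)$ into $Q_e$ at key $\ell$. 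The arbitrary-tiebreaking convention then lets me schedule all other key-$\ell$ operations that are free (vertex-pops, and edge-pops failing the usefulness test) ahead of $(a,b)$, producing a legal sequence $s_{\ms{candidate}}$ whose only evaluation is $e_{ab}$.

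The main obstacle is the final guarantee that $(a,b)$ is actually \emph{evaluated} rather than discarded, i.e. that it passes the usefulness test $g[b] > g[a] + w_{\ms{lazy}}(a,b)$ on line~\ref{line:lwastar-test}. Optimal substructure yields the non-strict bound $g[b] \geq g[a] + w_{\ms{lazy}}(a,b)$ immediately, since a strictly cheaper evaluated route to $b$ would contradict $p_{\ms{candidate}}$ being a shortest $w_{\ms{lazy}}$-path; generically this is strict and the test passes. The delicate residue is the degenerate case where an evaluated path reaches $b$ at cost \emph{exactly} $g[a] + w_{\ms{lazy}}(a,b)$, where strictness fails and $(a,b)$ would be skipped. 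This boundary tie is precisely what the arbitrary-tiebreaking convention must absorb, and the crux of the proof is to verify that whenever LazySP commits to $e_{ab}$ there is a correspondingly-resolved LWA* sequence that evaluates it — confirming the surjection onto the common set of frontier edges.
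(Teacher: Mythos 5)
Your opening step --- no pending item on $Q_v$ or $Q_e$ can carry a key below $\ell$, since such an item would witness a lazily-shorter path that LazySP would have preferred --- is exactly the paper's first move. The genuine gap is your transfer lemma. In this CLOSED-less LWA* variant, the stored $g[v]$ does \emph{not} equal the shortest evaluated-path distance to $v$ at the start of a sequence: $g$-values change only when an edge is popped from $Q_e$ and relaxed, so after some evaluation lowers $g[u]$, every already-evaluated edge $(u,v)$ leaves a stale head $g[v]$ until it is re-popped. Invariant~\ref{inv:lwastar} guarantees only that the stale pair remains queued (so the repair \emph{can} be performed), not that $g$ is already correct; the ``usual Dijkstra-with-reopening argument'' gives pointwise correctness only after those pending pops are flushed. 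Consequently you cannot conclude $g[a] = \ell - h_{\ms{lazy}}(a)$ at the comparison point, the keys of $a$ and $(a,b)$ need not yet equal $\ell$, and your ``single non-evaluating pop of $a$'' may be unavailable (e.g., with prefix $v_{\ms{start}} \rightarrow u \rightarrow a$, a late improvement to $g[u]$ leaves $g[a]$ too high, and one must first re-pop $u$ and the already-evaluated edge $(u,a)$). The paper's proof supplies precisely this missing mechanism: it walks the evaluated prefix segment by segment, showing inductively that the leading vertex $v_f$ satisfies $g[v_f] = \ell - h(v_f)$, and whenever the successor's $g$ is stale it uses Invariant~\ref{inv:lwastar} to pop $v_f$ from $Q_v$ and/or $(v_f, v_s)$ from $Q_e$ at key $\ell$ --- these free re-pops \emph{are} the sequence $s_{\ms{candidate}}$ the theorem asks you to construct. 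Your argument collapses into the paper's once the lemma is replaced by this constructive walk, but that walk is the heart of the proof, so as written the proposal does not go through.

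On your ``delicate residue'': you are correct that the final pop of $(a,b)$ triggers an evaluation only if $g[b] > g[a] + w_{\ms{lazy}}(a,b)$ strictly, and that an evaluated route to $b$ of length exactly $\ell - h_{\ms{lazy}}(b)$ makes line~\ref{line:lwastar-test} discard the edge; note that once this tie holds, no reordering of pops ever makes LWA* evaluate $(a,b)$, since $g[b]$ can only decrease, so tiebreaking on the LWA* side cannot absorb it. To your credit, you spotted something the paper's own proof glosses over --- it simply asserts ``we can pop and evaluate it'' without checking the usefulness test. But your write-up also stops at naming this obligation rather than discharging it (e.g., by arguing such exact ties cannot arise, or by weakening the equivalence claim in the degenerate case), so this step remains open in your proposal exactly where it is tacit in the paper.
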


\begin{theorem}
If LazySP-Forward and LWA* have evaluated the same set of edges,
then for any allowable sequence of vertices and edges $s_{\ms{candidate}}$
considered by LWA* yielding evaluated edge $e_{ab}$,
there exists an allowable LazySP candidate path $p_{\ms{candidate}}$
which also yields $e_{ab}$.
\label{thm:lwastar-equiv-to-lazy}
\end{theorem}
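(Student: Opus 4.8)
The plan is to mirror the argument for Theorem~\ref{thm:astar-equiv-to-lazy}: assuming both algorithms have evaluated the same edge set, I take the single edge $e_{ab}=(a,b)$ that LWA* evaluates at the end of its sequence $s_{\ms{candidate}}$ and exhibit a lazy--shortest path that LazySP-Forward is \emph{allowed} to return whose first unevaluated edge is exactly $(a,b)$. First I would isolate the state at the moment of evaluation: $(a,b)$ is popped from $Q_e$ as the global minimum-key item, it passes the usefulness test $g[b] > g[a] + w_{\ms{lazy}}(a,b)$, and it is as yet unevaluated, so $w_{\ms{lazy}}(a,b)=w_{\ms{est}}(a,b)$ and its key is $\kappa = g[a] + w_{\ms{lazy}}(a,b) + h_{\ms{lazy}}(b)$ with $h=h_{\ms{lazy}}$ and $\hat w = w_{\ms{lazy}}$ as in (\ref{eqn:h_lazy}). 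The non-evaluating pops preceding it in the sequence (vertex pops, and edge pops failing the usefulness test) change neither $E_{\ms{eval}}$ nor the $g$-values, so the analysis may focus on this final pop.

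Next I would build the candidate path $p = \pi_a \oplus (a,b) \oplus \sigma_b$. For the prefix, an induction on how $g$-values are assigned (each decrease of $g[a]$ comes from processing an incoming edge, traced back to the start) shows that $g[a]$ is realized by a fully evaluated start-to-$a$ path $\pi_a$, so $\mathrm{len}(\pi_a, w_{\ms{lazy}}) = g[a]$ and every edge of $\pi_a$ lies in $E_{\ms{eval}}$. For the suffix I take any lazy--shortest $b$-to-goal path $\sigma_b$, whose lazy length is $h_{\ms{lazy}}(b)$ by definition (\ref{eqn:h_lazy}). Then $\mathrm{len}(p, w_{\ms{lazy}}) = \kappa$, and since $\pi_a$ is fully evaluated while $(a,b)$ is not, the first unevaluated edge of $p$ is $(a,b)$ -- exactly what \textsc{SelectForward} returns. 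It remains to certify that $p$ is a \emph{shortest} lazy path, i.e. that $\kappa = \ell^*_{\ms{lazy}}$, where $\ell^*_{\ms{lazy}}$ is the lazy start-to-goal distance; since $p$ is a start-to-goal path, $\kappa \ge \ell^*_{\ms{lazy}}$ is immediate, so the whole argument reduces to $\kappa \le \ell^*_{\ms{lazy}}$.

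For $\kappa \le \ell^*_{\ms{lazy}}$ I would walk a lazy-optimal path $q^* = (v_0,\dots,v_k)$ with lazy prefix lengths $d_i$ and locate the first index $i$ with $g[v_i] > d_i$ (such an $i$ exists: otherwise $g[v_{\ms{goal}}] \le d_k = \ell^*_{\ms{lazy}}$, while the loop guard forces $\kappa < g[v_{\ms{goal}}]$, contradicting $\kappa \ge \ell^*_{\ms{lazy}}$). As $i$ is the first failing index we have $g[v_{i-1}] \le d_{i-1}$, hence $g[v_{i-1}] + w_{\ms{lazy}}(v_{i-1},v_i) \le d_i < g[v_i]$; by Invariant~\ref{inv:lwastar} either $v_{i-1}\in Q_v$ or $(v_{i-1},v_i)\in Q_e$, and using consistency of $h_{\ms{lazy}}$, namely $h_{\ms{lazy}}(v_{i-1}) \le \ell^*_{\ms{lazy}} - d_{i-1}$ and $h_{\ms{lazy}}(v_i) \le \ell^*_{\ms{lazy}} - d_i$, the key of whichever item is present is at most $\ell^*_{\ms{lazy}}$. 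Since $(a,b)$ holds the minimum queue key $\kappa$, this yields $\kappa \le \ell^*_{\ms{lazy}}$, hence $\kappa = \ell^*_{\ms{lazy}}$ and $p$ is an allowable LazySP-Forward candidate yielding $(a,b)$ (the existential \emph{allowable} quantifier absorbing any tiebreaking).

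The main obstacle is this last step's reliance on Invariant~\ref{inv:lwastar}, stated with $g[v]+\max(w,w_{\ms{lazy}})<g[v']$, whereas the walk supplies only the weaker lazy inequality $g[v_{i-1}]+w_{\ms{lazy}}(v_{i-1},v_i)<g[v_i]$. For an evaluated frontier edge the two coincide, but for an \emph{unevaluated} edge whose estimate underestimates the true weight the invariant does not fire directly. I expect to close this gap with a separate lifecycle argument: trace the edge from the last time its source was popped (so it entered $Q_e$ with the current $g[v_{i-1}]$) and observe that, since $g$ is nonincreasing and the usefulness test discards it only when $g[v_i] \le g[v_{i-1}] + w_{\ms{lazy}}(v_{i-1},v_i)$, an unevaluated such edge must still reside on $Q_e$ (or its source on $Q_v$). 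Handling the reopening bookkeeping in this sublemma -- and confirming the concatenation $p$ may be taken simple without disturbing its first unevaluated edge -- is where I anticipate the real work; the companion direction (Theorem~\ref{thm:lwastar-equiv-from-lazy}) and the termination and tiebreaking conventions are then routine.
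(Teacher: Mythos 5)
Your proposal is correct and takes essentially the same route as the paper's proof: you construct the candidate $\pi_a \oplus (a,b) \oplus \sigma_b$ from a fully-evaluated prefix realizing $g[a]$ and a lazy-shortest suffix realizing $h_{\ms{lazy}}(b)$, so its lazy length equals the popped key $\kappa$, and you then certify lazy-optimality by walking a shorter (lazily optimal) path and invoking Invariant~\ref{inv:lwastar} to exhibit a $Q_v$/$Q_e$ item whose key would contradict the minimality of $\kappa$ --- precisely the paper's two-part argument, with your $\kappa \le \ell^*_{\ms{lazy}}$ reformulation equivalent to the paper's ``no shorter path exists.'' The obstacle you flag (the $\max(w,\hat w)$ hypothesis of Invariant~\ref{inv:lwastar} not firing for unevaluated edges) is genuine but equally implicit in the paper, whose proof walks only the evaluated edges of $p'$ and then asserts without justification that the first unevaluated edge ``would then be popped from $Q_e$''; your lifecycle sublemma (an unevaluated edge violating the lazy relaxation must still be on $Q_e$ or have its source on $Q_v$, since discarding requires the usefulness inequality and any later decrease of $g$ at the source reinserts it into $Q_v$) is the correct patch and makes the argument tighter at the one point where the paper is glib.
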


\subsection{Relation to Bidirectional Heuristic Search}

LazySP-Alternate chooses unevaluated edges from either
the beginning or the end of the candidate path at each iteration.
We conjecture that an alternating version of the Expand selector
is edge-equivalent to the
Bidirectional Heuristic Front-to-Front Algorithm
\cite{sint1977bhffa}
for appropriate lazy vertex pair heuristic,
and that LazySP-Alternate is edge-equivalent
to a bidirectional LWA*.

\section{Novel Edge Selectors}

\begin{figure}[t]
   \centering
   \begin{tikzpicture}
      \tikzset{>=latex}
      
      \node[draw,minimum width=2.4cm,minimum height=3.0cm] (startbox) at (-3.0,0) {};
      \node[inner sep=0pt] at (-3.0,-0.35) {\includegraphics[scale=2.0]{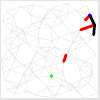}};
      \node[align=center,font=\small,below] at (startbox.north) {known\\edges};
      
      \node[draw] (quesbox) at (-1.2,0) {?};
      
      \node[draw,minimum width=2.1cm,minimum height=3.0cm] (pathsbox) at (0.4,0) {};
      \node[inner sep=0pt] at (-0.05, 0.1) {\includegraphics[scale=0.8]{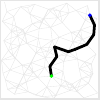}};
      \node[inner sep=0pt] at ( 0.85, 0.1) {\includegraphics[scale=0.8]{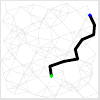}};
      \node[inner sep=0pt] at (-0.05,-0.8) {\includegraphics[scale=0.8]{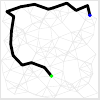}};
      \node[inner sep=0pt] at ( 0.85,-0.8) {\includegraphics[scale=0.8]{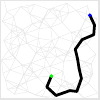}};
      \node[align=center,font=\small,below] at (pathsbox.north) {path\\distribution};
      \node[align=center,font=\normalsize,above] at (pathsbox.south) {$\dots$};
      
      \node[draw,minimum width=2.4cm,minimum height=3.0cm] (goalbox) at (3.0,0) {};
      \node[inner sep=0pt] at (3.0,-0.35) {\includegraphics[scale=2.0]{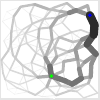}};
      \node[align=center,font=\small,below] at (goalbox.north) {edge indicator\\distributions};
      
      \draw[->] (startbox) -- (quesbox);
      \draw[->] (quesbox) -- (pathsbox);
      \draw[->] (pathsbox) -- (goalbox);
      
   \end{tikzpicture}
   \caption{Illustration of maximum edge probability selectors.
      A distribution over paths
      (usually conditioned on the known edge evaluations)
      induces on each edge $e$ a Bernoulli distribution
      with parameter $p(e)$
      giving the probability that it belongs to the path.
      The selector chooses the edge with the largest such probability.}
   \label{fig:maxprob-selectors-overview}
\end{figure}

\begin{algorithm}[t]
   \caption{Maximum Edge Probability Selector\\
      \emph{(for WeightSamp and Partition path distributions)}}
   \begin{algorithmic}[1]
   \Function {\textsc{SelectMaxEdgeProb}}{$G, p_{\ms{candidate}}, \mathcal{D}_p$}
   \State $p(e) \leftarrow \Pr( \, e \in P \, )
      \mbox{ for } P \sim \mathcal{D}_p$
   \State $e_{\ms{max}} \leftarrow$ unevaluated $e \in p_{\ms{candidate}}$
      maximizing $p(e)$
   \State \Return $\{ e_{\ms{max}} \}$
   \EndFunction
   \end{algorithmic}
   \label{alg:selectmaxscore}
\end{algorithm}

Because we are conducting a search over paths,
we are free to implement selectors which are not constrained to
evaluate edges in any particular order
(i.e. to maintain evaluated trees rooted at the start and goal
vertices).
In this section,
we describe a novel class of edge selectors which is designed
to reduce the total number of edges evaluated during the course
of the LazySP algorithm.
These selectors operate by maintaining a distribution over the
set of potential paths $P$ at each iteration of the algorithm
(see Figure~\ref{fig:maxprob-selectors-overview}).
Such a path distribution induces a Bernoulli distribution for each
edge $e$ which indicates its probability $p(e)$ to lie on
the potential path;
at each iteration,
the selectors then choose the unevaluated edge that maximizes
this edge indicator probability (Algorithm~\ref{alg:selectmaxscore}).
The two selectors described in this section differ
with respect to how they maintain this distribution over potential paths.

\subsection{Weight Function Sampling Selector}

The first selector, WeightSamp,
is motivated by the intuition that it is preferable to evaluate edges
that are most likely to lie on the true shortest path.
Therefore,
it computes its path distribution $\mathcal{D}_p$
by performing shortest path queries
on sampled edge weight functions drawn from a distribution
$\mathcal{D}_w$.
This edge weight distribution is conditioned on the the known weights
of all previously evaluated edges $E_{\ms{eval}}$:
\begin{equation}
   \mathcal{D}_p : \mbox{SP}(w)
   \mbox{ for } w \sim \mathcal{D}_w(E_{\ms{eval}})
   \label{eqn:weightsamp}.
\end{equation}

\begin{figure}[t]
   \centering
   \begin{tikzpicture}
      \tikzset{>=latex}
      
      \node[draw,minimum width=1.8cm,minimum height=2.6cm] (startbox) at (-4.4,0) {};
      \node[inner sep=0pt] at (-4.4,-0.35) {\includegraphics[scale=1.5]{build/figs/lazysp-fig-dists/fig-sofar}};
      \node[align=center,font=\small,below] at (startbox.north) {known\\edges};
      
      \node[draw,minimum width=1.8cm,minimum height=6cm] (abox) at (-2.2,0) {};
      \node[inner sep=0pt] at (-2.2, 1.3) {\includegraphics[scale=1.5]{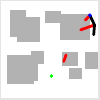}};
      \node[inner sep=0pt] at (-2.2,-0.3) {\includegraphics[scale=1.5]{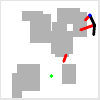}};
      \node[inner sep=0pt] at (-2.2,-1.9) {\includegraphics[scale=1.5]{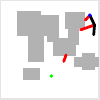}};
      \node[align=center,font=\small,below] at (abox.north) {obstacle\\distribution};
      \node[align=center,font=\normalsize,above] at (abox.south) {$\dots$};
      
      \node[draw,minimum width=1.8cm,minimum height=6cm] (bbox) at (0,0) {};
      \node[inner sep=0pt] at (0, 1.3) {\includegraphics[scale=1.5]{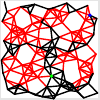}};
      \node[inner sep=0pt] at (0,-0.3) {\includegraphics[scale=1.5]{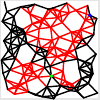}};
      \node[inner sep=0pt] at (0,-1.9) {\includegraphics[scale=1.5]{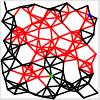}};
      \node[align=center,font=\small,below] at (bbox.north) {weight fn\\distribution};
      \node[align=center,font=\normalsize,above] at (bbox.south) {$\dots$};
      
      \node[draw,minimum width=1.8cm,minimum height=6cm] (cbox) at (2.2,0) {};
      \node[inner sep=0pt] at (2.2, 1.3) {\includegraphics[scale=1.5]{build/figs/lazysp-fig-dists/fig-path-00}};
      \node[inner sep=0pt] at (2.2,-0.3) {\includegraphics[scale=1.5]{build/figs/lazysp-fig-dists/fig-path-01}};
      \node[inner sep=0pt] at (2.2,-1.9) {\includegraphics[scale=1.5]{build/figs/lazysp-fig-dists/fig-path-02}};
      \node[align=center,font=\small,below] at (cbox.north) {path\\distribution};
      \node[align=center,font=\normalsize,above] at (cbox.south) {$\dots$};
      
      \draw[->] (startbox) -- (abox);
      \draw[->] (abox) -- (bbox);
      \draw[->] (bbox) -- (cbox);
   \end{tikzpicture}
   \caption{The WeightSamp selector uses the path distribution induced by
      solving the shortest path problem on a distribution over possible
      edge weight functions $\mathcal{D}_w$.
      In this example, samples from $\mathcal{D}_w$ are computed by
      drawing samples from $\mathcal{D}_O$,
      the distribution of obstacles that are consistent with
      the known edge evaluations.}
   \label{fig:weightsamp}
\end{figure}

For example,
the distribution $\mathcal{D}_w$ might consist of
the edge weights induced by a model of the distribution of
environment obstacles
(Figure~\ref{fig:weightsamp}).
Since this obstacle distribution is conditioned on the results
of known edge evaluations,
we consider the subset of worlds which are consistent
with the edges we have evaluated so far.
However,
depending on the fidelity of this model,
solving the corresponding shortest path problem for a given
sampled obstacle arrangement might require as much computation as
solving the original problem,
since it requires computing the resulting edge weights.
In practice,
we can approximate $\mathcal{D}_w$
by assuming that each edge is independently distributed.

\subsection{Partition Function Selector}

While the WeightSamp selector captures the intuition that it is
preferable to focus edge evaluations in areas that are useful for
many potential paths,
the computational cost required to calculate it at each iteration
may render it intractable.
One candidate path distribution that is more efficient to compute
has a density function which follows an exponential form:
\begin{equation}
   \mathcal{D}_p : f_P(p) \propto
   \exp( - \beta \, \mbox{len}(p, w_{\ms{lazy}}) ).
\end{equation}
In other words,
we consider all potential paths $P$
between the start and goal vertices,
with shorter paths assigned more probability than longer ones
(with positive parameter $\beta$).
We call this the Partition selector
because this distribution is closely related to calculating
partition functions from statistical mechanics.
The corresponding partition function is:
\begin{equation}
   Z(P) = \sum_{p \in P}
      \exp( - \beta \, \mbox{len}(p, w_{\ms{lazy}}) ).
   \label{eqn:partitionfn}
\end{equation}
Note that the edge indicator probability
required in Algorithm~\ref{alg:selectmaxscore}
can then be written:
\begin{equation}
   p(e) = 1 - \frac{Z(P \setminus e)}{Z(P)}.
   \label{eqn:edge-ind-prob}
\end{equation}
Here, $P \setminus e$ denotes paths in $P$ that do not
contain edge $e$.

\begin{figure}[t!]
   \centering
   \begin{subfigure}[b]{8.5cm}
      \centering
      \includegraphics{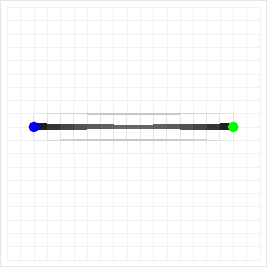}
      \includegraphics{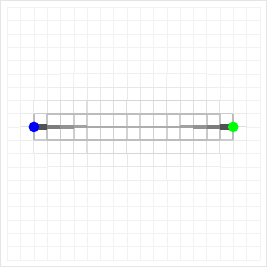}
      \includegraphics{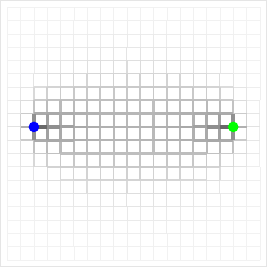}
      \vspace{-0.05in}
      \caption{Initial $p(e)$ scores on a constant-weight
         grid with $\beta$: 50, 33, 28}
      \vspace{0.05in}
      \label{subfig:partition-empty}
   \end{subfigure}
   \begin{subfigure}[b]{8.5cm}
      \centering
      \includegraphics{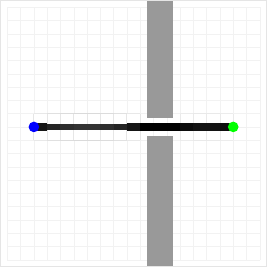}
      \includegraphics{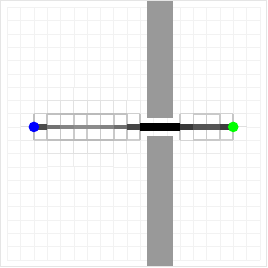}
      \includegraphics{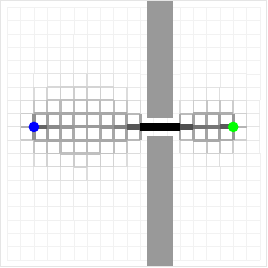}
      \vspace{-0.05in}
      \caption{Initial $p(e)$ scores with $\infty$-weight
         obstacles with $\beta$: 50, 33, 28}
      \vspace{0.05in}
      \label{subfig:partition-passage}
   \end{subfigure}
   \begin{subfigure}[b]{4cm}
      \centering
      \includegraphics{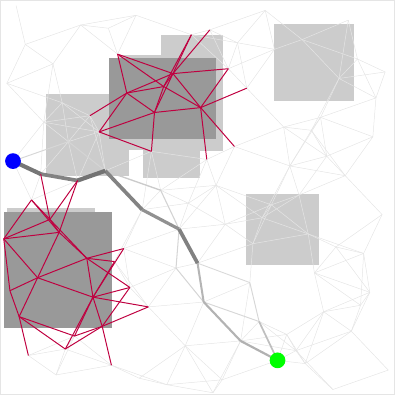}
      \caption{Initial $p(e)$ scores}
      \label{subfig:partition-example-initial}
   \end{subfigure}
   \;
   \begin{subfigure}[b]{4cm}
      \centering
      \includegraphics{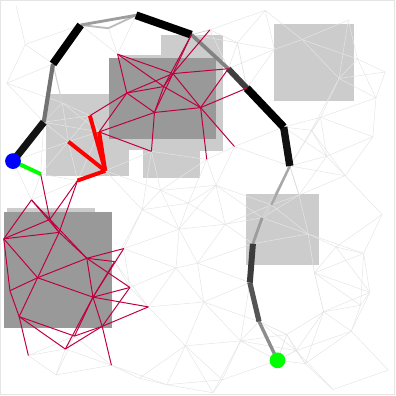}
      \caption{Scores after five evaluations}
      \label{subfig:partition-example-after5}
   \end{subfigure}
   \caption{Examples of the Partition selector's
      $p(e)$ edge score function.
      (\subref{subfig:partition-empty}) With no known obstacles,
      a high $\beta$ assigns near-unity score to only edges on the
      shortest path;
      as $\beta$ decreases and more paths are considered,
      edges immediately adjacent to the roots score highest.
      (\subref{subfig:partition-passage}) Since all paths must pass
      through the narrow passage,
      edges within score highly.
      (\subref{subfig:partition-example-initial})
      For a problem with two a-priori known obstacles (dark gray),
      the score first prioritizes evaluations between the two.
      (\subref{subfig:partition-example-after5})
      Upon finding these edges are blocked,
      the next edges that are prioritized lie along the top of the world.}
   \label{ref:example-scores}
\end{figure}

\begin{figure*}[t!]
\centering
\includegraphics[width=3.49cm]{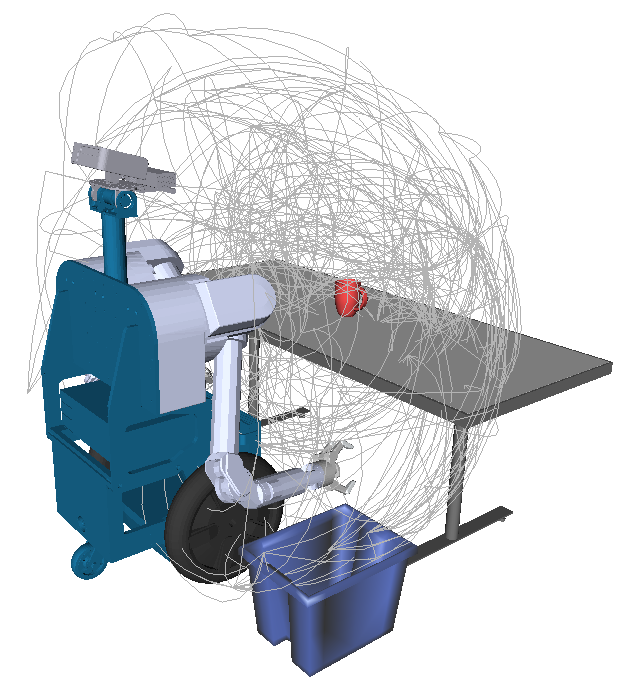}
\includegraphics[width=3.49cm]{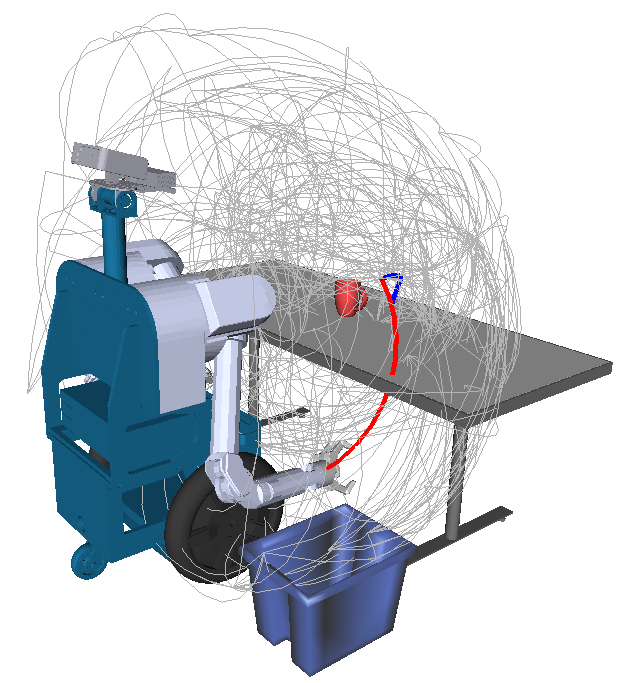}
\includegraphics[width=3.49cm]{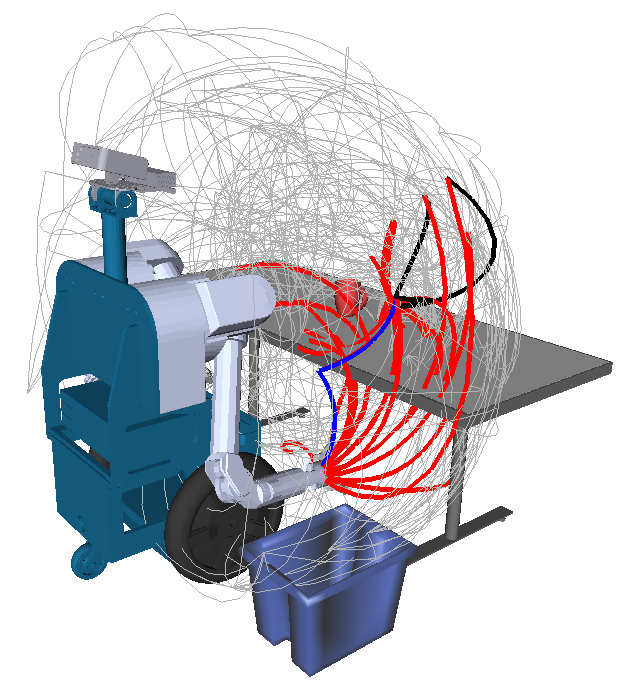}
\includegraphics[width=3.49cm]{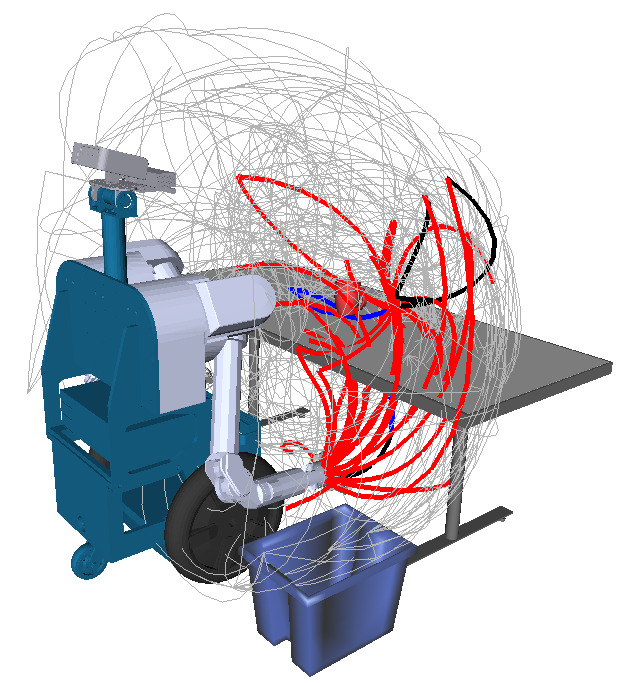}
\includegraphics[width=3.49cm]{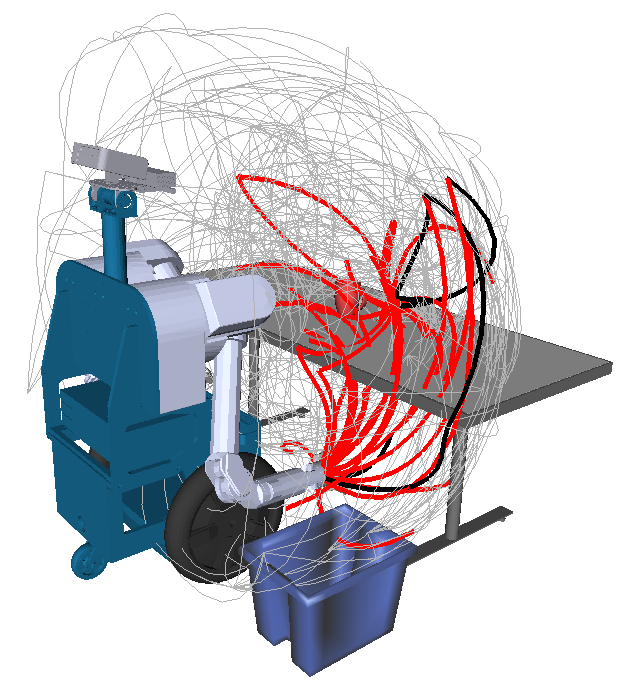}
\caption{Visualization of the first of three articulated motion
   planning problems in which the HERB robot must move its right arm
   from the start configuration (pictured)
   to any of seven grasp configurations for a mug.
   Shown is the progression of the Alternate selector on one of the
   randomly generated roadmaps;
   approximately 2\% of the 7D roadmap is shown in gray by projecting
   onto the space of end-effector positions.}
\label{fig:herbbin0}
\end{figure*}

It may appear advantageous to restrict $P$ to only
\emph{simple} paths,
since all optimal paths are simple.
Unfortunately,
an algorithm for computing (\ref{eqn:partitionfn}) efficiently is not
currently known in this case.
However,
in the case that $P$ consists of all paths,
there exists an efficient incremental calculation of
(\ref{eqn:partitionfn}) via a recursive formulation
which we detail here.

We use the notation $Z_{xy} = Z(P_{xy})$,
with $P_{xy}$ the set of paths from $x$ to $y$.
Suppose the values $Z_{xy}$ are known between
all pairs of vertices $x, y$ for a graph $G$.
(For a graph with no edges,
$Z_{xy}$ is 1 if $x = y$ and 0 otherwise.)
Consider a modified graph $G'$ with one additional edge $e_{ab}$
with weight $w_{ab}$.
All additional paths use the new edge $e_{ab}$ a non-zero
number of times;
the value $Z'_{xy}$ can be shown to be
\begin{equation}
   Z'_{xy} = Z_{xy} + \frac{Z_{xa} Z_{by}}{\exp(\beta w_{ab}) - Z_{ba}}
   \mbox{ if }
   \exp(\beta w_{ab}) > Z_{ba}.
\end{equation}
This form is derived from simplifying the induced geometric series;
note that if $\exp(\beta w_{ab})  \leq Z_{ba}$,
the value $Z'_{xy}$ is infinite.
One can also derive the inverse:
given values $Z'$,
calculate the values $Z$ if an edge were removed.

This incremental formulation of (\ref{eqn:partitionfn})
allows for the corresponding score $p(e)$ for edges
to be updated efficiently during each iteration of LazySP as
the $w_{\ms{lazy}}$ value for edges chosen for evaluation are updated.
In fact,
if the values $Z$ are stored in a square matrix,
the update for all pairs after an edge weight change consists of a single
vector outer product.

\section{Experiments}

We compared the seven edge selectors on three classes of shortest path
problems.
The average number of edges evaluated by each,
as well as timing results from our implementations,
are shown in Figure~\ref{fig:results}.
In each case,
the estimate was chosen so that $w_{\ms{est}} \leq w$,
so that all runs produced optimal paths.
The experimental results serve primarily to illustrate that
the A* and LWA* algorithms
(i.e. Expand and Forward)
are not optimally edge-efficient,
but they also expose differences in behavior and prompt
future research directions.
All experiments were conducted using an open-source
implementation.\footnote{%
https://github.com/personalrobotics/lemur}
Motion planning results were implemented using
OMPL \cite{sucan2012ompl}.

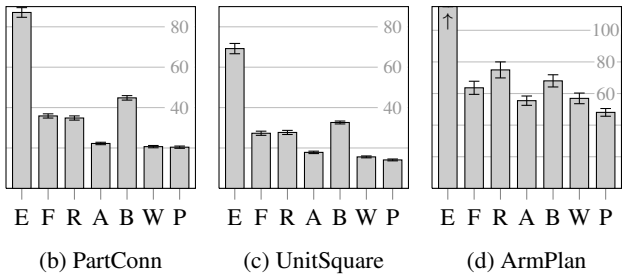
\begin{figure}[t!]
   \centering
   \begin{subfigure}[b]{\columnwidth}%
      \centering
      {\small%
      \setlength{\tabcolsep}{0.06cm}%
      \begin{tabular}{lrrrrrrr}
         \toprule
            & E\;\;\;\;
            & F\;\;\;\; & R\;\;\;\; & A\;\;\;\;
            & B\;\;\;\; & W\;\;\;\; & P\ddag\;\; \\
         \midrule
         \addlinespace[0.3em]
         PartConn &  87.10 & 35.86 & 34.84 & 22.23 & 44.81 & \textbf{20.66} & \textbf{20.39} \\
         \;\;\emph{online\dag (ms)} & \bf\emph{1.22} & \emph{1.96} & \emph{1.86} & \bf\emph{1.20} & \emph{2.41} & \emph{4807.19} & \emph{3.32} \\
         \;\;\;\;\emph{sel (ms)} & \emph{0.02} & \emph{0.01} & \emph{0.01} & \emph{0.01} & \emph{0.03} & \emph{4805.64} & \emph{2.07} \\
         \addlinespace[0.3em]
         UnitSquare &  69.21 & 27.29 & 27.69 & 17.82 & 32.62 & 15.58 & \textbf{14.08} \\
         \;\;\emph{online\dag (ms)} & \bf\emph{0.91} & \emph{1.47} & \emph{1.49} & \bf\emph{0.94} & \emph{1.71} & \emph{3864.95} & \emph{1.72} \\
         \;\;\;\;\emph{sel (ms)} & \emph{0.01} & \emph{0.01} & \emph{0.01} & \emph{0.01} & \emph{0.02} & \emph{3863.49} & \emph{0.87} \\
         \addlinespace[0.3em]
         ArmPlan(avg) & 949.05 & 63.62 & 74.94 & 55.48 & 68.01 & 56.93 & \textbf{48.07} \\
         \;\;\emph{online (s)} & \emph{269.82} & \bf\emph{5.90} & \emph{8.22} & \bf\emph{5.96} & \emph{7.34} & \emph{3402.21} & \bf\emph{5.80} \\
         \;\;\;\;\emph{sel (s)} & \emph{0.00} & \emph{0.00} & \emph{0.00} & \emph{0.00} & \emph{0.00} & \emph{3392.76} & \emph{1.54} \\
         \;\;\;\;\emph{eval (s)} & \emph{269.78} & \emph{5.87} & \emph{8.20} & \emph{5.94} & \emph{7.31} & \emph{9.39} & \emph{4.21} \\
         \addlinespace[0.3em]
         ArmPlan1 &  344.74 & \textbf{49.72} & 95.58 & 59.44 & 58.90 & 73.72 & \textbf{50.66} \\
         \;\;\emph{online (s)} & \emph{109.09} & \bf\emph{4.81} & \emph{14.81} & \emph{7.03} & \emph{7.91} & \emph{3375.35} & \emph{7.25} \\
         \;\;\;\;\emph{sel (s)} & \emph{0.00} & \emph{0.00} & \emph{0.00} & \emph{0.00} & \emph{0.00} & \emph{3358.82} & \emph{1.61} \\
         \;\;\;\;\emph{eval (s)} & \emph{109.07} & \emph{4.78} & \emph{14.77} & \emph{7.01} & \emph{7.88} & \emph{16.47} & \emph{5.59} \\
         \addlinespace[0.3em]
         ArmPlan2 &  657.02 & \textbf{62.24} & 98.54 & 69.96 & 75.88 & \textbf{66.24} & \textbf{62.16} \\
         \;\;\emph{online (s)} & \emph{166.19} & \bf\emph{3.27} & \emph{7.36} & \emph{5.95} & \emph{5.63} & \emph{4758.04} & \emph{5.99} \\
         \;\;\;\;\emph{sel (s)} & \emph{0.00} & \emph{0.00} & \emph{0.00} & \emph{0.00} & \emph{0.00} & \emph{4750.16} & \emph{2.03} \\
         \;\;\;\;\emph{eval (s)} & \emph{166.17} & \emph{3.26} & \emph{7.34} & \emph{5.93} & \emph{5.61} & \emph{7.82} & \emph{3.91} \\
         \addlinespace[0.3em]
         ArmPlan3 & 1845.38 & 78.90 & \textbf{30.70} & 37.04 & 69.26 & \textbf{30.82} & \textbf{31.38} \\
         \;\;\emph{online (s)} & \emph{534.16} & \emph{9.61} & \bf\emph{2.50} & \emph{4.91} & \emph{8.47} & \emph{2073.23} & \emph{4.17} \\
         \;\;\;\;\emph{sel (s)} & \emph{0.00} & \emph{0.00} & \emph{0.00} & \emph{0.00} & \emph{0.00} & \emph{2069.29} & \emph{0.98} \\
         \;\;\;\;\emph{eval (s)} & \emph{534.10} & \emph{9.58} & \emph{2.48} & \emph{4.89} & \emph{8.44} & \emph{3.90} & \emph{3.15} \\
         \addlinespace[0.15em]
         \bottomrule
      \end{tabular}%
      }%
      \caption{Average number of edges evaluated for each problem class
         and selector.
         The minimum selector,
         along with any selector within one unit of its standard error,
         is shown in bold.
         The ArmPlan class is split into its three constituent problems.
         Online timing results are also shown,
         including the components from the invoking the selector
         and evaluating edges.
         \dag PartConn and UnitSquare involve trivial edge evaluation
         time.
         \ddag Timing for the Partition selector does not include
         pre-computation time.}
      \label{subfig:table-results}
   \end{subfigure}
   
   \vspace{0.1in}
   
   \begin{subfigure}[b]{2.75cm}
      \centering
      \begin{tikzpicture}
      \begin{axis}[
         width=4.1cm,
         height=4.0cm,
         ybar,
         bar width=7,
         ymin=0,ymax=90,
         ytick pos=bottom,
         symbolic x coords={E, F, R, A, B, W, P},
         xtick=data,
         xtick pos=left,
         ymajorgrids,
         ymajorticks=false,
         ticklabel style={font=\small}
         ]
      \node[circle,fill=white,inner sep=1pt,text=black!40] at (axis cs:P,40) {\scriptsize 40};
      \node[circle,fill=white,inner sep=1pt,text=black!40] at (axis cs:P,60) {\scriptsize 60};
      \node[circle,fill=white,inner sep=1pt,text=black!40] at (axis cs:P,80) {\scriptsize 80};
      \addplot[color=black,fill=black!20,error bars/.cd,y dir=both,y explicit] coordinates {
         (E, 87.10) +- (2.39,2.39)
         (F, 35.86) +- (1.04,1.04)
         (R, 34.84) +- (1.04,1.04)
         (A, 22.23) +- (0.60,0.60)
         (B, 44.81) +- (1.11,1.11)
         (W, 20.66) +- (0.57,0.57)
         (P, 20.39) +- (0.56,0.56)
      };
      \end{axis}
      \end{tikzpicture}
      \caption{PartConn}
   \end{subfigure}
   \begin{subfigure}[b]{2.75cm}
      \centering
      \begin{tikzpicture}
      \begin{axis}[
         width=4.1cm,
         height=4.0cm,
         ybar,
         bar width=7,
         ymin=0,ymax=90,
         ytick pos=bottom,
         symbolic x coords={E, F, R, A, B, W, P},
         xtick=data,
         xtick pos=left,
         ymajorgrids,
         ymajorticks=false,
         ticklabel style={font=\small}
         ]
      \node[circle,fill=white,inner sep=1pt,text=black!40] at (axis cs:P,40) {\scriptsize 40};
      \node[circle,fill=white,inner sep=1pt,text=black!40] at (axis cs:P,60) {\scriptsize 60};
      \node[circle,fill=white,inner sep=1pt,text=black!40] at (axis cs:P,80) {\scriptsize 80};
      \addplot[color=black,fill=black!20,error bars/.cd,y dir=both,y explicit] coordinates {
         (E, 69.21) +- (2.55,2.55)
         (F, 27.29) +- (1.03,1.03)
         (R, 27.69) +- (1.02,1.02)
         (A, 17.82) +- (0.60,0.60)
         (B, 32.62) +- (0.72,0.72)
         (W, 15.58) +- (0.47,0.47)
         (P, 14.08) +- (0.46,0.46)
      };
      \end{axis}
      \end{tikzpicture}
      \caption{UnitSquare}
   \end{subfigure}
   \begin{subfigure}[b]{2.75cm}
      \centering
      \begin{tikzpicture}
      \begin{axis}[
         width=4.1cm,
         height=4.0cm,
         ybar,
         bar width=7,
         ymin=0,ymax=115,
         max space between ticks=10,
         ytick pos=bottom,
         symbolic x coords={E, F, R, A, B, W, P},
         xtick=data,
         xtick pos=left,
         ymajorgrids,
         ymajorticks=false,
         ticklabel style={font=\small}
         ]
      \node[circle,fill=white,inner sep=0pt,text=black!40] at (axis cs:P,60) {\scriptsize 60};
      \node[circle,fill=white,inner sep=0pt,text=black!40] at (axis cs:P,80) {\scriptsize 80};
      \node[circle,fill=white,inner sep=0pt,text=black!40] at (axis cs:P,100) {\scriptsize 100};
      \addplot[color=black,fill=black!20,error bars/.cd,y dir=both,y explicit] coordinates {
         (E, 115) +- (0,0) 
         (F, 63.62) +- (4.15,4.15)
         (R, 74.94) +- (5.07,5.07)
         (A, 55.48) +- (2.95,2.95)
         (B, 68.01) +- (3.86,3.86)
         (W, 56.93) +- (3.37,3.37)
         (P, 48.07) +- (2.44,2.44)
      };
      \node[align=center,anchor=north,inner sep=0pt] at (axis cs:E,111) {\scriptsize $\uparrow$};
      \end{axis}
      \end{tikzpicture}
      \caption{ArmPlan}
   \end{subfigure}
   \caption{
      Experimental results for the three problem classes
      across each of the seven selectors,
      E:Expand, F:Forward, R:Reverse,
      A:Alternate, B:Bisection,
      W:WeightSamp, and P:Partition.
      In addition to the summary table (a),
      the plots (b-d) show summary statistics for
      each problem class.
      The means and standard errors in (b-c) are across the
      1000 and 900 problem instances, respectively.
      The means and standard errors in (d) are for
      the average across the three constituent problems
      for each of the 50 sampled roadmaps.}
   \label{fig:results}
\end{figure}

\subsubsection{Random partially-connected graphs.}
We tested on a set of 1000 randomly-generated undirected graphs
with $|V|=100$,
with each pair of vertices sharing an edge with probability 0.05.
Edges have an independent 0.5 probability of having infinite weight,
else the weight is uniformly distributed on $[1,2]$;
the estimated weight was unity for all edges.
For the WeightSamp selector,
we drew 1000 $w$ samples at each iteration
from the above edge weight distribution.
For the Partition selector, we used $\beta = 2$.

\subsubsection{Roadmap graphs on the unit square.}
We considered roadmap graphs formed via the first 100 points
of the $(2,3)$-Halton sequence on the unit square
with a connection radius of 0.15,
with 30 pairs of start and goal vertices chosen randomly.
The edge weight function was derived from 30 sampled obstacle fields
consisting of 10 randomly placed
axis-aligned boxes with dimensions uniform on $[0.1,0.3]$,
with each edge having infinite weight on collision,
and weight equal to its Euclidean length otherwise.
One of the resulting 900 example problems is shown in
Figure~\ref{fig:snapshots}.
For the WeightSamp selector,
we drew 1000 $w$ samples
with a na\"{\i}ve edge weight distribution
with each having an independent 0.1 collision probability.
For the Partition selector, we used $\beta = 21$.

\subsubsection{Roadmap graphs for robot arm motion planning.}
We considered roadmap graphs in the configuration space
corresponding to the 7-DOF right arm of the HERB home robot
\cite{srinivasa2012herb20} across
three motion planning problems inspired by a table clearing scenario
(see Figure~\ref{fig:herbbin0}).
The problems consisted of first moving from the robot's
home configuration to one of 7 feasible grasp configurations for a mug
(pictured),
second transferring the mug to one of 72 feasible configurations with
the mug above the blue bin,
and third returning to the home configuration.
Each problem was solved independently.
This common scenario spans various numbers of starts/goals
and allows a comparison w.r.t. difficulty at different problem
stages as discussed later.

For each problem,
50 random graphs were constructed by applying a random offset to
the 7D Halton sequence with $N = 1000$,
with additional vertices for each problem start and goal configuration.
We used an edge connection radius of 3 radians,
resulting $|E|$ ranging from 23404 to 28109.
Each edge took infinite weight on collision,
and weight equal to its Euclidean length otherwise.
For the WeightSamp selector,
we drew 1000 $w$ samples
with a na\"{\i}ve edge weight distribution in which
each edge had an independent 0.1 probability of collision.
For the Partition selector, we used $\beta = 3$.

\section{Discussion}
\label{sec:discussion}

The first observation that is evident from the experimental results
is that lazy evaluation
-- whether using Forward (LWA*) or one of the other selectors --
grossly outperforms Expand (A*).
The relative penalty that Expand incurs by evaluating all edges from
each expanded vertex is a function of the graph's branching factor.

Since the Forward and Reverse selectors are simply mirrors of each
other,
they exhibit similar performance
averaged across the PartConn and UnitSquare problem classes,
which are symmetric.
However,
this need not the case for a particular instance.
For example,
the start of ArmPlan1 and the goal of ArmPlan3 consist
of the arm's single home configuration in a relatively confined space.
As shown in the table in Figure~\ref{subfig:table-results},
it appears that the better selector on these problems attempts
to solve the more constrained side of the problem first.
While it may be difficult to determine a priori which part of the
problem will be the most constrained,
the simple Alternate selector's respectable performance
suggests that it may be a reasonable compromise.

The per-path plots at the bottom of Figure~\ref{fig:snapshots}
allow us to characterize the selectors' behavior.
For example,
Alternate often evaluates several edges on each path before finding
an obstacle.
Its early evaluations also tend to be useful later,
and it terminates after considering 10 paths on the illustrated problem.
In contrast, Bisection exhibits a fail-fast strategy,
quickly invalidating most paths after a single evaluation,
but needing 16 such paths (with very little reuse)
before it terminates.
In general, the Bisection selector did not outperform any of the
lazy selectors in terms of number of edges evaluated.
However,
it may be well suited to problem domains in which
evaluations that fail tend be less costly.

The novel selectors based on path distributions tend to minimize
edge evaluations on the problems we considered.
While the WeightSamp selector performs similarly to Partition on the
simpler problems,
it performs less well in the ArmPlan domain.
This may be because many more samples are needed to approximate
the requisite path distribution.

The path distribution selectors are motivated by focusing evaluation
effort in areas that are useful for many distinct candidate paths,
as illustrated in Figure~\ref{ref:example-scores}.
Note that in the absence of a priori knowledge,
the edges nearest to the start and goal tend to have the highest
$p(e)$ score,
since they are members of many potential paths.
Because it tends to focus evaluations in a similar way,
the Alternate selector may serve as a simple proxy for the
more complex selectors.

We note that
an optimal edge selector could be theoretically achieved by posing the
edge selection problem as a POMDP,
given a probabilistic model of the true costs.
While likely intractable in complex domains,
exploring this solution may yield useful approximations or insights.

\subsubsection{Timing results.}
Figure~\ref{subfig:table-results}
shows that the five simple selectors incur a negligible
proportion of the algorithm's runtime.
The WeightSamp and Partition selectors both require additional
time (especially the former) in order to reduce the time spent
evaluating edges.
This tradeoff depends intimately on the problem domain considered.
In the ArmPlan problem class,
the Partition selector was able to reduce average total online runtime
slightly despite an additional 1.54s of selector time.
Note that Partition requires an expensive computation of the graph's
initial $Z$-values,
which are independent of the true weights and start/goal vertices
(and can therefore be pre-computed, e.g. across all ArmPlan instances).
\ifx\aaaiversion\undefined
Full timing results are available in the appendix
(Figure~\ref{fig:table-timing-results}).
\else
Full timing results are available in \cite{dellin2016lazyspextended}.
\fi

\subsubsection{Optimizations.}
While we have focused on edge evaluations as the dominant source
of computational cost,
other considerations may also be important.
There are a number of optimizations that allow for efficient
implementation of LazySP.

The first relates to the repeated invocations of the inner
shortest path algorithm (line~\ref{line:lazy-outline-shortestpath}
of Algorithm~\ref{alg:lazy-outline}).
Because only a small number of edges change
weights between invocations,
an incremental search algorithm such as SSSP \cite{ramalingam1996}
or LPA* \cite{koenig2004lpastar} can be used to greatly improve
the speed of the inner searches.
Since the edge selector determines where on the graph edges are
evaluated,
the choices of the selector and the inner search algorithm
are related.
For example,
using the Forward selector with an incremental inner search
rooted at the goal
results in behavior similar to D* \cite{stentz1994dstar}
(albeit without the need to handle a moving start location)
since a large portion of the inner tree can be reused.

An optimization commonly applied to vertex searches called
\emph{immediate expansion} is also applicable to LazySP.
If an edge is evaluated with weight $w \leq w_{\ms{est}}$,
the inner search need not be run again before the next edge
is evaluated.

A third optimization is applicable to domains with infinite edge costs
(e.g. to represent infeasible transitions).
If the length of the path returned by the inner shortest path
algorithm is infinite,
LazySP can return this path immediately even if some of its
edges remain unevaluated
without affecting its (sub)optimality.
This reduces the number of edge evaluations needed in the case that
no feasible path exists.

\subsubsection{Other methods for expensive edge evaluations.}
An alternative to lazy evaluations is based on the observation that
when solved by vertex expansion algorithms,
such problems are characterized by slow vertex expansions.
To mitigate this,
approaches such as Parallel A* \cite{irani1986parallelastar}
and Parallel A* for Slow Expansions \cite{phillips2014pastarse}
aim to parallelize such expansions.
We believe that a similar approach can be applied to LazySP.

Another approach to finding short paths quickly is to relax the
optimization objective (\ref{eqn:objective}) itself.
While LazySP already supports a \emph{bounded-suboptimal} objective
via an inflated edge weight estimate
(Theorem~\ref{thm:lazy-optimality}),
it may also be possible to adapt the algorithm to address
\emph{bounded-cost} problems \cite{stern2011pts}.

\subsection{Acknowledgements}
We would like to thank Aaron Johnson and Michael Koval
for their comments on drafts of this work.
This work was (partially) funded by
the National Science Foundation IIS (\#1409003),
Toyota Motor Engineering \& Manufacturing (TEMA),
and the Office of Naval Research.

\bibliography{references}
\bibliographystyle{aaai}

\ifx\aaaiversion\undefined

\clearpage
\appendix
\section{Appendix: Proofs}
\label{sec:appendix-proofs}

\subsection{LazySP}

\begin{proof}[Proof of Theorem \ref{thm:lazy-optimality}]
Let $p^*$ be an optimal path w.r.t. $w$,
with $\ell^* = \mbox{len}(p^*, w)$.
Since $w_{\ms{est}}(e) \leq \epsilon \, w(e)$ and $\epsilon \geq 1$,
it follows that regardless of which edges are stored in $W_{\ms{eval}}$,
$w_{\ms{lazy}}(e) \leq \epsilon \, w(e)$,
and therefore
$\mbox{len}(p^*, w_{\ms{lazy}}) \leq \epsilon \, \ell^*$.
Now,
since the inner \textsc{ShortestPath} algorithm terminated with
$p_{\ms{ret}}$,
we know that
$\mbox{len}(p_{\ms{ret}}, w_{\ms{lazy}}) \leq \mbox{len}(p^*, w_{\ms{lazy}})$.
Further,
since the algorithm terminated with $p_{\ms{ret}}$,
each edge on $p_{\ms{ret}}$ has been evaluated;
therefore,
$\mbox{len}(p_{\ms{ret}}, w) = \mbox{len}(p_{\ms{ret}}, w_{\ms{lazy}})$.
Therefore,
$\mbox{len}(p_{\ms{ret}}, w) \leq \epsilon \, \ell^*$.
\end{proof}

\begin{proof}[Proof of Theorem \ref{thm:lazy-completeness}]
In this case,
the algorithm will evaluate at least unevaluated edge at
each iteration.
Since there are a finite number of edges,
eventually the algorithm will terminate.
\end{proof}

\subsection{A* Equivalence}

\begin{proof}[Proof of Invariant \ref{inv:astar-cundisc-popen}]
If $v$ is discovered, then it must either be on OPEN or CLOSED.
$v$ can be on CLOSED only after it has been expanded,
in which case $v'$ would be discovered (which it is not).
Therefore, $v$ must be on OPEN.
\end{proof}

\begin{proof}[Proof of Invariant \ref{inv:astar-wless-popen}]
Clearly the invariant holds at the beginning of the algorithm,
with only $v_{\ms{start}}$ on OPEN.
If the invariant were to no longer hold after some iteration,
then there must exist some pair of discovered vertices $v$ and $v'$
with $v$ on CLOSED and $g[v] + w(v,v') < g[v']$.
Since $v$ is on CLOSED,
it must have been expanded at some previous iteration,
immediately after which the inequality could not have held
because $g[v']$ is updated upon expansion of $v$.
Therefore,
the inequality must have newly held after some intervening iteration,
with $v$ remaining on CLOSED.
Since the values $g$ are monotonically non-increasing and $w$ is fixed,
this implies that $g[v]$ must have been updated (lower).
However,
if this had happened,
then $v$ would have been removed from CLOSED and placed on OPEN.
This contradiction implies that the invariant holds at every iteration.
\end{proof}

\begin{proof}[Proof of Theorem \ref{thm:astar-equiv-from-lazy}]
Consider path $p^*_{\ms{lazy}}$ with length $\ell^*_{\ms{lazy}}$
yielding frontier vertex $v_{\ms{frontier}}$ via \textsc{SelectExpand}.
Construct a vertex sequence $s$ as follows.
Initialize $s$ with the vertices on $p^*_{\ms{lazy}}$
from $v_{\ms{start}}$ to $v_{\ms{frontier}}$, inclusive.
Let $N$ be the number of consecutive vertices at the start of $s$
for which $f(v) = \ell^*_{\ms{lazy}}$
(Note that the first vertex on $p^*_{\ms{lazy}}$, $v_{\ms{start}}$,
must have $f(v_{\ms{start}}) = \ell^*_{\ms{lazy}}$,
so $N \geq 1$.)
Remove from the start of $s$ the first $N-1$ vertices.
Note that at most the first vertex on $s$ has
$f(v) = \ell^*_{\ms{lazy}}$,
and the last vertex on $s$ must be $v_{\ms{frontier}}$.

Now we show that each vertex in this sequence $s$,
considered by A* in turn,
exists on OPEN with minimal $f$-value.
Iteratively consider the following procedure for sequence $s$.
Throughout,
we know that there must not be any vertex with
$f(v) < \ell^*_{\ms{lazy}}$;
that would imply that a different path through $v_b$ shorter than
$\ell^*_{\ms{lazy}}$ exists,
in which case $p^*_{\ms{lazy}}$ could not have been chosen.

If the sequence has length $>1$,
then consider the first two vertices on $s$, $v_a$ and $v_b$.
By construction,
$f(v_a) = \ell^*_{\ms{lazy}}$
and 
$f(v_b) \neq \ell^*_{\ms{lazy}}$.
In fact, from above
we know that $f(v_b) > \ell^*_{\ms{lazy}}$.
Therefore,
we have that $f(v_a) < f(v_b)$,
therefore and $g[v_a] + w(v_a,v_b) < g[v_b]$.
By Invariant~\ref{inv:astar-wless-popen},
$v_a$ must be on OPEN,
and with $f(v_a) = \ell^*_{\ms{lazy}}$,
it can therefore be considered by A*.
After it is expanded,
$f(v_b) = \ell^*_{\ms{lazy}}$,
and we can repeat the above procedure
with the sequence formed by removing the $v_a$ from $s$.

If instead the sequence has length $1$,
then it must be exactly $(v_{\ms{frontier}})$,
with $f(v_{\ms{frontier}}) = \ell^*_{\ms{lazy}}$.
Since the edge after $f(v_{\ms{frontier}})$ is not
evaluated,
then by Invariant~\ref{inv:astar-cundisc-popen},
$v_{\ms{frontier}}$ must be on OPEN,
and will therefore be expanded next.
\end{proof}

\begin{proof}[Proof of Theorem \ref{thm:astar-equiv-to-lazy}]
Given that all vertices in $s_{\ms{candidate}}$ besides the last
are re-expansions,
they can be expanded with no edge evaluations.
Once the last vertex,
$v_{\ms{frontier}}$,
is to be expanded by A*,
suppose it has $f$-value $\ell$.

First,
we will show that there exists a path with length $\ell$ w.r.t.
$w_{\ms{lazy}}$
wherein all edges before $v_{\ms{frontier}}$ have been evaluated,
and the first edge after $v_{\ms{frontier}}$ has not.
Let $p_a$ be a shortest path from $v_{\ms{start}}$
to $v_{\ms{frontier}}$ consisting of only evaluated edges.
The length of this $p_a$ must be equal to $g[v_{\ms{frontier}}]$;
if it were not,
there would be some previous vertex on $p_a$ with lower $f$-value
than $v_{\ms{frontier}}$,
which would necessarily have been expanded first.
Let $p_b$ be the a shortest path from $v_{\ms{frontier}}$
to $v_{\ms{goal}}$.
The length of $p_b$ must be $h_{\ms{lazy}}(v_{\ms{frontier}})$
by definition.
Therefore, the path $(p_a, p_b)$ must have length $\ell$,
and since $v_{\ms{frontier}}$ is a new expansion,
the first edge on $p_b$ must be unevaluated.

Second,
we will show that there does not exist any path shorter than $\ell$
w.r.t. $w_{\ms{lazy}}$.
Suppose $p'$ were such a path, with length $\ell' < \ell$.
Clearly, $v_{\ms{start}}$ would have $f$-value $\ell'$ (although
it may not be on OPEN).
Consider each pair of vertices $(v_a, v_b)$ along $p'$ in turn.
In each case,
if $v_b$ were either undiscovered,
or if $g[v_a] + w(v_a, v_b) < g[v_b]$,
then $v_a$ would be on OPEN
(via Invariants \ref{thm:lazy-completeness}
and \ref{inv:astar-wless-popen}, respectively)
with $f(v_a) = \ell'$,
and would therefore have been expanded before $v_{\ms{frontier}}$.
Otherwise,
we know that $f(v_b) = \ell'$,
and we can continue to the next pair on $p'$.
\end{proof}

\subsection{LWA* Equivalence}

\begin{proof}[Proof of Invariant \ref{inv:lwastar}]
Clearly the invariant holds at the beginning of the algorithm
with only $g[v_{\ms{start}}] = 0$,
since the inequality holds only for the out-edges of $v_{\ms{start}}$,
with $v_{\ms{start}}$ on $Q_v$.
Consider each subsequent iteration.
If a vertex $v$ is popped from $Q_v$,
then this may invalidate the invariant for all successors of $v$;
however,
since all out-edges are immediately added to $Q_e$,
the invariant must hold.
Consider instead if an edge $(v, v')$ which satisfies the inequality
is popped from $Q_e$.
Due to the inequality,
we know that $g[v']$ will be recalculated as
$g[v'] = g[v] + w(v,v')$,
so that the inequality is no longer satisfied for edge $(v,v')$.
However,
reducing the value $g[v']$ may introduce satisfied inequalities across
subsequent out-edges of $v'$,
but since $v'$ is added to $Q_v$,
the invariant continues to hold.
\end{proof}

\begin{proof}[Proof of Theorem \ref{thm:lwastar-equiv-from-lazy}]
In the first component of the equivalence,
we will show that for any path $p$ minimizing $w_{\ms{lazy}}$
allowable to LazySP-Forward,
with $(v_a, v_b)$ the first unevaluated edge on $p$,
there exists a sequence of vertices and edges on
$Q_v$ and $Q_e$ allowable to LWA*
such that edge $(v_a, v_b)$ is the first to be newly evaluated.
Let the length of $p$ w.r.t. $w_{\ms{lazy}}$ be $\ell$.'

We will first show that no vertex on $Q_v$ or edge on $Q_e$
can have $f(\cdot) < \ell$.
Suppose such a vertex $v$, or edge $e$ with source vertex $v$,
exists.
Then $g[v] + h(v) < \ell$,
and there must be some path $p'$ consisting of an evaluated segment
from $v_{\ms{start}}$ to $v$ of length $g[v]$,
followed a segment from $v$ to $v_{\ms{goal}}$ of length $h(v)$.
But then this path should have been chosen by LazySP.

Next, we will show a procedure for generating an allowable
sequence for LWA*.
We will iteratively consider a sequence of path segments,
starting with the segment from $v_{\ms{start}}$ to $v_a$,
and becoming progressively shorter at each iteration by removing the
first vertex and edge on the path.
We will show that the first vertex on each segment $v_f$
has $g[v_f] = \ell - h(v_f)$.
By definition,
this is true of the first such segment, since $g[v_{\ms{start}}] = 0$.
For each but the last such segment,
consider the first edge, $(v_f, v_s)$.
If $v_s$ has the correct $g[\cdot]$,
we can continue to the next segment immediately.
Otherwise,
either $v_f$ is on $Q_v$ or $(v_f, v_s)$ is on $Q_e$ by
Invariant~\ref{inv:lwastar}.
If the former is true,
then $v_f$ can be popped from $Q_v$ with $f = \ell$,
thereby adding $(v_f, v_s)$ to $Q_e$.
Then,
$(v_f, v_s)$ can be popped from $Q_e$ with $f = \ell$,
resulting in $g[v_s] = \ell - h(v_s)$.
We can then move on to the next segment.

At the end of this process,
we have the trivial segment $(v_a)$,
with $g[v_a] = \ell - h(v_a)$.
If $v_a$ is on $Q_v$, then pop it (with $f(v_a) = \ell$),
placing $e_{ab}$ on $Q_e$;
otherwise, since $e_{ab}$ is unevaluated,
it must already be on $Q_e$.
Since $f(e_{ab}) = \ell$, we can pop and evaluate it.
\end{proof}

\begin{proof}[Proof of Theorem \ref{thm:lwastar-equiv-to-lazy}]
Given that all vertices in $s_{\ms{candidate}}$ entail no edge evaluations,
and all edges therein are re-expansions,
they can be considered with no edge evaluations.
Once the last edge $e_{xy}$ is to be expanded by LWA*,
suppose it has $f$-value $\ell$.

First,
we will show that there exists a path with length $\ell$ w.r.t.
$w_{\ms{lazy}}$
which traverses unevaluated edge $e_{xy}$
wherein all edges before $v_x$ have been evaluated.
Let $p_x$ be a shortest path segment from $v_{\ms{start}}$
to $v_x$ consisting of only evaluated edges.
The length of $p_x$ must be equal to $g[v_x]$;
if it were not,
there would be some previous vertex on $p_x$ with lower $f$-value
than $v_x$,
which would necessarily have been expanded first.
Let $p_y$ be the a shortest path from $v_y$
to $v_{\ms{goal}}$.
The length of $p_y$ must be $h_{\ms{lazy}}(v_y)$
by definition.
Therefore, the path $(p_x, e_{xy}, p_y)$ must have length $\ell$.

Second,
we will show that there does not exist any path shorter than $\ell$
w.r.t. $w_{\ms{lazy}}$.
Suppose $p'$ were such a path, with length $\ell' < \ell$,
and with first unevaluated edge $e'_{xy}$.
Clearly, $v_{\ms{start}}$ has
$g[v_{\ms{start}}] = \ell' - h(v_{\ms{start}})$.
Consider each evaluated edge $e'_{ab}$ along $p'$ in turn.
In each case,
if $g[v'_b] \neq \ell' - h(v'_b)$,
then either $v'_a$ or $e'_{ab}$ would be on $Q_v$ or $Q_e$
with $f(\cdot) = \ell'$,
and would therefore be expanded before $e_{xy}$.
Therefore,
$e'_{xy}$ would then be popped from $Q_e$ with $f(e'_{xy}) = \ell'$,
and it would have been evaluated before $e_{xy}$ with $f(e_{xy}) = \ell$.
\end{proof}

\section{Appendix: Timing Results}
\label{sec:appendix-timing}

We include an accounting of the cumulative computation time taken by
each component of LazySP for each of the seven selectors
for each problem class (Figure~\ref{fig:table-timing-results}).

\balance

\begin{figure*}[t]
   \centering
   {\small%
   \setlength{\tabcolsep}{0.06cm}%
   \begin{tabular}{l@{\hskip 9pt}rc@{\hskip 0pt}r@{\hskip 9pt}rc@{\hskip 0pt}r@{\hskip 9pt}rc@{\hskip 0pt}r@{\hskip 9pt}rc@{\hskip 0pt}r@{\hskip 9pt}rc@{\hskip 0pt}r@{\hskip 9pt}rc@{\hskip 0pt}r@{\hskip 9pt}rc@{\hskip 0pt}r}
      \toprule
         & \multicolumn{3}{c}{Expand} & \multicolumn{3}{c}{Forward} & \multicolumn{3}{c}{Reverse} & \multicolumn{3}{c}{Alternate}
         & \multicolumn{3}{c}{Bisect} & \multicolumn{3}{c}{WeightSamp} & \multicolumn{3}{c}{Partition} \\
      \midrule
      \addlinespace[0.5em]
      PartConn & & & & & & & & & & & & & & & & & & & & & \\
      \;\;\emph{total (ms)}    &\bf1.22 &$\pm$&   0.04 &    1.96 &$\pm$&  0.06 &    1.86 &$\pm$&  0.06 &\bf1.20 &$\pm$& 0.03 &  2.41 &$\pm$& 0.06 & 4807.19 &$\pm$& 135.22 &   15.81 &$\pm$&  0.16 \\
      \;\;\emph{sel-init (ms)} & --\;\; &     &        &  --\;\; &     &       &  --\;\; &     &       & --\;\; &     &      &--\;\; &     &      &  --\;\; &     &        &   12.49 &$\pm$&  0.11 \\
      \;\;\emph{online (ms)}   &\bf1.22 &$\pm$&   0.04 &    1.96 &$\pm$&  0.06 &    1.86 &$\pm$&  0.06 &\bf1.20 &$\pm$& 0.03 &  2.41 &$\pm$& 0.06 & 4807.19 &$\pm$& 135.22 &    3.32 &$\pm$&  0.10 \\
      \;\;\emph{search (ms)}   &   0.48 &$\pm$&   0.01 &    1.12 &$\pm$&  0.03 &    1.05 &$\pm$&  0.03 &   0.68 &$\pm$& 0.02 &  1.38 &$\pm$& 0.04 &    0.70 &$\pm$&   0.02 &    0.68 &$\pm$&  0.02 \\
      \;\;\emph{sel (ms)}      &   0.02 &$\pm$&   0.00 &    0.01 &$\pm$&  0.00 &    0.01 &$\pm$&  0.00 &   0.01 &$\pm$& 0.00 &  0.03 &$\pm$& 0.00 & 4805.64 &$\pm$& 135.18 &    2.07 &$\pm$&  0.06 \\
      \;\;\emph{eval (ms)}     & --\;\; &     &        &  --\;\; &     &       &  --\;\; &     &       & --\;\; &     &      &--\;\; &     &      &  --\;\; &     &        &  --\;\; &     &       \\
      \;\;\emph{eval (edges)}  &  87.10 &$\pm$&   2.39 &   35.86 &$\pm$&  1.04 &   34.84 &$\pm$&  1.04 &  22.23 &$\pm$& 0.60 & 44.81 &$\pm$& 1.11 &\bf20.66 &$\pm$&   0.57 &\bf20.39 &$\pm$&  0.56 \\
      \addlinespace[0.5em]
      UnitSquare & & & & & & & & & & & & & & & & & & & & & \\
      \;\;\emph{total (ms)}    &\bf0.91 &$\pm$&   0.03 &    1.47 &$\pm$&  0.06 &    1.49 &$\pm$&  0.06 &\bf0.94 &$\pm$& 0.03 &  1.71 &$\pm$& 0.04 & 3864.95 &$\pm$& 117.66 &   15.13 &$\pm$&  0.14 \\
      \;\;\emph{sel-init (ms)} & --\;\; &     &        &  --\;\; &     &       &  --\;\; &     &       & --\;\; &     &      &--\;\; &     &      &  --\;\; &     &        &   13.41 &$\pm$&  0.12 \\
      \;\;\emph{online (ms)}   &\bf0.91 &$\pm$&   0.03 &    1.47 &$\pm$&  0.06 &    1.49 &$\pm$&  0.06 &\bf0.94 &$\pm$& 0.03 &  1.71 &$\pm$& 0.04 & 3864.95 &$\pm$& 117.66 &    1.72 &$\pm$&  0.06 \\
      \;\;\emph{search (ms)}   &   0.35 &$\pm$&   0.01 &    0.79 &$\pm$&  0.03 &    0.82 &$\pm$&  0.03 &   0.51 &$\pm$& 0.02 &  0.92 &$\pm$& 0.02 &    0.75 &$\pm$&   0.02 &    0.45 &$\pm$&  0.01 \\
      \;\;\emph{sel (ms)}      &   0.01 &$\pm$&   0.00 &    0.01 &$\pm$&  0.00 &    0.01 &$\pm$&  0.00 &   0.01 &$\pm$& 0.00 &  0.02 &$\pm$& 0.00 & 3863.49 &$\pm$& 117.62 &    0.87 &$\pm$&  0.03 \\
      \;\;\emph{eval (ms)}     & --\;\; &     &        &  --\;\; &     &       &  --\;\; &     &       & --\;\; &     &      &--\;\; &     &      &  --\;\; &     &        &  --\;\; &     &       \\
      \;\;\emph{eval (edges)}  &  69.21 &$\pm$&   2.55 &   27.29 &$\pm$&  1.03 &   27.69 &$\pm$&  1.02 &  17.82 &$\pm$& 0.60 & 32.62 &$\pm$& 0.72 &   15.58 &$\pm$&   0.47 &\bf14.08 &$\pm$&  0.46 \\
      \addlinespace[0.5em]
      ArmPlan (avg) & & & & & & & & & & & & & & & & & & & & & \\
      \;\;\emph{total (s)}    &  269.82 &$\pm$&  17.95 &\bf 5.90 &$\pm$&  0.46 &    8.22 &$\pm$&  0.53 &\bf5.96 &$\pm$& 0.31 &  7.34 &$\pm$& 0.43 & 3402.21 &$\pm$& 172.20 &  496.57 &$\pm$&  5.53 \\
      \;\;\emph{sel-init (s)} &  --\;\; &     &        &  --\;\; &     &       &  --\;\; &     &       & --\;\; &     &      &--\;\; &     &      &  --\;\; &     &        &  490.77 &$\pm$&  5.51 \\
      \;\;\emph{online (s)}   &  269.82 &$\pm$&  17.95 &\bf 5.90 &$\pm$&  0.46 &    8.22 &$\pm$&  0.53 &\bf5.96 &$\pm$& 0.31 &  7.34 &$\pm$& 0.43 & 3402.21 &$\pm$& 172.20 &\bf 5.80 &$\pm$&  0.28 \\
      \;\;\emph{search (s)}   &    0.02 &$\pm$&   0.00 &    0.02 &$\pm$&  0.00 &    0.02 &$\pm$&  0.00 &   0.02 &$\pm$& 0.00 &  0.02 &$\pm$& 0.00 &    0.02 &$\pm$&   0.00 &    0.04 &$\pm$&  0.00 \\
      \;\;\emph{sel (s)}      &    0.00 &$\pm$&   0.00 &    0.00 &$\pm$&  0.00 &    0.00 &$\pm$&  0.00 &   0.00 &$\pm$& 0.00 &  0.00 &$\pm$& 0.00 & 3392.76 &$\pm$& 171.74 &    1.54 &$\pm$&  0.09 \\
      \;\;\emph{eval (s)}     &  269.78 &$\pm$&  17.95 &    5.87 &$\pm$&  0.45 &    8.20 &$\pm$&  0.52 &   5.94 &$\pm$& 0.31 &  7.31 &$\pm$& 0.43 &    9.39 &$\pm$&   0.57 &    4.21 &$\pm$&  0.22 \\
      \;\;\emph{eval (edges)} &  949.05 &$\pm$&  63.46 &   63.62 &$\pm$&  4.15 &   74.94 &$\pm$&  5.07 &  55.48 &$\pm$& 2.95 & 68.01 &$\pm$& 3.86 &   56.93 &$\pm$&   3.37 &\bf48.07 &$\pm$&  2.44 \\
      \addlinespace[0.25em]
      ArmPlan1 & & & & & & & & & & & & & & & & & & & & & \\
      \;\;\emph{total (s)}    &  109.09 &$\pm$&  14.15 &\bf 4.81 &$\pm$&  0.49 &   14.81 &$\pm$&  1.45 &   7.03 &$\pm$& 0.63 &  7.91 &$\pm$& 0.70 & 3375.35 &$\pm$& 319.81 &  496.74 &$\pm$&  8.22 \\
      \;\;\emph{sel-init (s)} &  --\;\; &     &        &  --\;\; &     &       &  --\;\; &     &       & --\;\; &     &      &--\;\; &     &      &  --\;\; &     &        &  489.49 &$\pm$&  8.18 \\
      \;\;\emph{online (s)}   &  109.09 &$\pm$&  14.15 &\bf 4.81 &$\pm$&  0.49 &   14.81 &$\pm$&  1.45 &   7.03 &$\pm$& 0.63 &  7.91 &$\pm$& 0.70 & 3375.35 &$\pm$& 319.81 &    7.25 &$\pm$&  0.66 \\
      \;\;\emph{search (s)}   &    0.02 &$\pm$&   0.00 &    0.02 &$\pm$&  0.00 &    0.03 &$\pm$&  0.00 &   0.02 &$\pm$& 0.00 &  0.02 &$\pm$& 0.00 &    0.02 &$\pm$&   0.00 &    0.04 &$\pm$&  0.00 \\
      \;\;\emph{sel (s)}      &    0.00 &$\pm$&   0.00 &    0.00 &$\pm$&  0.00 &    0.00 &$\pm$&  0.00 &   0.00 &$\pm$& 0.00 &  0.00 &$\pm$& 0.00 & 3358.82 &$\pm$& 318.17 &    1.61 &$\pm$&  0.16 \\
      \;\;\emph{eval (s)}     &  109.07 &$\pm$&  14.15 &    4.78 &$\pm$&  0.49 &   14.77 &$\pm$&  1.44 &   7.01 &$\pm$& 0.63 &  7.88 &$\pm$& 0.70 &   16.47 &$\pm$&   1.68 &    5.59 &$\pm$&  0.51 \\
      \;\;\emph{eval (edges)} &  344.74 &$\pm$&  39.63 &\bf49.72 &$\pm$&  4.25 &   95.58 &$\pm$&  9.67 &  59.44 &$\pm$& 5.06 & 58.90 &$\pm$& 4.74 &   73.72 &$\pm$&   7.63 &\bf50.66 &$\pm$&  4.43 \\
      \addlinespace[0.25em]
      ArmPlan2 & & & & & & & & & & & & & & & & & & & & & \\
      \;\;\emph{total (s)}    &  166.19 &$\pm$&   9.29 &\bf 3.27 &$\pm$&  0.25 &    7.36 &$\pm$&  0.69 &   5.95 &$\pm$& 0.52 &  5.63 &$\pm$& 0.45 & 4758.04 &$\pm$& 407.56 &  495.21 &$\pm$& 12.65 \\
      \;\;\emph{sel-init (s)} &  -    - &     &        &  --\;\; &     &       &  --\;\; &     &       & --\;\; &     &      &--\;\; &     &      &  --\;\; &     &        &  489.22 &$\pm$& 12.64 \\
      \;\;\emph{online (s)}   &  166.19 &$\pm$&   9.29 &\bf 3.27 &$\pm$&  0.25 &    7.36 &$\pm$&  0.69 &   5.95 &$\pm$& 0.52 &  5.63 &$\pm$& 0.45 & 4758.04 &$\pm$& 407.56 &    5.99 &$\pm$&  0.48 \\
      \;\;\emph{search (s)}   &    0.01 &$\pm$&   0.00 &    0.01 &$\pm$&  0.00 &    0.02 &$\pm$&  0.00 &   0.01 &$\pm$& 0.00 &  0.01 &$\pm$& 0.00 &    0.02 &$\pm$&   0.00 &    0.03 &$\pm$&  0.00 \\
      \;\;\emph{sel (s)}      &    0.00 &$\pm$&   0.00 &    0.00 &$\pm$&  0.00 &    0.00 &$\pm$&  0.00 &   0.00 &$\pm$& 0.00 &  0.00 &$\pm$& 0.00 & 4750.16 &$\pm$& 406.98 &    2.03 &$\pm$&  0.22 \\
      \;\;\emph{eval (s)}     &  166.17 &$\pm$&   9.28 &\bf 3.26 &$\pm$&  0.25 &    7.34 &$\pm$&  0.69 &   5.93 &$\pm$& 0.52 &  5.61 &$\pm$& 0.45 &    7.82 &$\pm$&   0.61 &    3.91 &$\pm$&  0.27 \\
      \;\;\emph{eval (edges)} &  657.02 &$\pm$&  29.24 &\bf62.24 &$\pm$&  6.12 &   98.54 &$\pm$& 10.89 &  69.96 &$\pm$& 6.98 & 75.88 &$\pm$& 7.47 &\bf66.24 &$\pm$&   6.36 &\bf62.16 &$\pm$&  6.10 \\
      \addlinespace[0.25em]
      ArmPlan3 & & & & & & & & & & & & & & & & & & & & & \\
      \;\;\emph{total (s)}    &  534.16 &$\pm$&  55.64 &    9.61 &$\pm$&  1.33 &\bf 2.50 &$\pm$&  0.23 &   4.91 &$\pm$& 0.56 &  8.47 &$\pm$& 0.99 & 2073.23 &$\pm$& 198.75 &  497.76 &$\pm$& 10.27 \\
      \;\;\emph{sel-init (s)} &  --\;\; &     &        &  --\;\; &     &       &  --\;\; &     &       & --\;\; &     &      &--\;\; &     &      &  --\;\; &     &        &  493.59 &$\pm$& 10.21 \\
      \;\;\emph{online (s)}   &  534.16 &$\pm$&  55.64 &    9.61 &$\pm$&  1.33 &\bf 2.50 &$\pm$&  0.23 &   4.91 &$\pm$& 0.56 &  8.47 &$\pm$& 0.99 & 2073.23 &$\pm$& 198.75 &    4.17 &$\pm$&  0.43 \\
      \;\;\emph{search (s)}   &    0.02 &$\pm$&   0.01 &    0.02 &$\pm$&  0.00 &    0.02 &$\pm$&  0.00 &   0.02 &$\pm$& 0.00 &  0.02 &$\pm$& 0.00 &    0.03 &$\pm$&   0.01 &    0.04 &$\pm$&  0.00 \\
      \;\;\emph{sel (s)}      &    0.00 &$\pm$&   0.00 &    0.00 &$\pm$&  0.00 &    0.00 &$\pm$&  0.00 &   0.00 &$\pm$& 0.00 &  0.00 &$\pm$& 0.00 & 2069.29 &$\pm$& 198.53 &    0.98 &$\pm$&  0.13 \\
      \;\;\emph{eval (s)}     &  534.10 &$\pm$&  55.63 &    9.58 &$\pm$&  1.33 &    2.48 &$\pm$&  0.23 &   4.89 &$\pm$& 0.56 &  8.44 &$\pm$& 0.99 &    3.90 &$\pm$&   0.31 &    3.15 &$\pm$&  0.32 \\
      \;\;\emph{eval (edges)} & 1845.38 &$\pm$& 195.57 &   78.90 &$\pm$& 10.36 &\bf30.70 &$\pm$&  3.62 &  37.04 &$\pm$& 4.59 & 69.26 &$\pm$& 7.97 &\bf30.82 &$\pm$&   3.60 &\bf31.38 &$\pm$&  3.80 \\
      \addlinespace[0.25em]
      \bottomrule
   \end{tabular}%
   }%
   \caption{
      Detailed timing results for each selector.
      The actual edge weights for the illustrative
      PartConn and UnitSquare problems were pre-computed,
      and therefore their timings are not included.
      The Partition selector requires initialization of the $Z$-values
      (\ref{eqn:partitionfn}) for the graph using only the estimated
      edge weights.
      Since this is not particular to either the actual edge weights
      (e.g. from the obstacle distribution)
      or the start/goal vertices from a particular instance,
      this initialization (\emph{sel-init}) is considered separately.
      The online running time (\emph{online}) is broken into LazySP's
      three primary steps: the inner search (\emph{search}),
      invoking the edge selector (\emph{sel}),
      and evaluating edges (\emph{eval}).
      We also show the number of edges evaluated.}
   \label{fig:table-timing-results}
\end{figure*}

\fi 

\end{document}